\title{Towards Optimal Prior-Free  Permissionless Rebate Mechanisms, with  applications to Automated Market Makers \& Combinatorial Orderflow Auctions.}
\titlerunning{Towards Optimal Prior-Free Permissionless Rebates} 
\author{Bruno Mazorra Roig}{Universtitat Pompeu Fabra, Spain}{brunomazorra@gmail.com}{https://orcid.org/0000-0003-0779-0765}{}
\newtcolorbox{mybox}[1][]{enhanced jigsaw,breakable,pad at break=1mm,
  oversize,left=8mm,interior hidden,colframe=black,nobeforeafter=,#1}
\newtcolorbox{mybox2}[2][]{%
  attach boxed title to top center
               = {yshift=-8pt},
  fonttitle    = \bfseries,
  title        = #2,#1,
  enhanced,
}
\newcommand{\x}{\textbf{x}}
\newcommand{\p}{\textbf{p}}
\author{Nicol\'as Della Penna}{Amurado Research}{nikete@gmail.com}{[orcid]}{}
\authorrunning{B. Mazorra and N. Della Penna} 
\keywords{Game theory, Mechanism design, MEV, CFMM, Rebates} 
\begin{document}

\maketitle
\begin{abstract}
Maximal Extractable Value (MEV) has become a critical issue for blockchain ecosystems, as it enables validators or block proposers to extract value by ordering, including or censoring users' transactions. This paper aims to present a formal approach for determining the appropriate compensation for users whose transactions are executed in bundles, as opposed to individually. We explore the impact of MEV on users, discuss the Shapley value as a solution for fair compensation, and delve into the mechanisms of MEV rebates and auctions as a means to undermine the power of the block producer.


\end{abstract} 

\section{Introduction}

In the design of decentralized permissionless systems, it can be  desirable to rebate users part of the value they create for the system.
Applying recent advances in the understanding on the limits of permissionless false-name proof  (aka Sybil resistant) mechanisms \cite{mazorra2023cost}, we study the fundamental limits on what optimal rebates can be in such settings. We make concrete contributions to the understanding of rebate applications related to (1) automated market maker liquidity providers who have a positive externality to the network to the degree that there are liquidity premia in the system), see section \ref{section:LPrebates} (2) order flow auctions that attempt to pay back the MEV that transactions generate, and the impossibility of doing so under sybils even when it is possible without them, see section \ref{section:OrderFlow}.

Maximal Extractable Value (MEV) \cite{daian2020flash} poses significant risks to both users and developers of current blockchain ecosystems. MEV refers to the rents that validators or block proposers can extract by manipulating users' transactions, such as reordering or censoring them, often to the detriment of the users involved. A common manifestation of MEV occurs when users provide liquidity to Constant Function Market Makers (CFMM) due to adverse selection, or when users trade with CFMM through front running and sandwich attacks.

We first consider rebates to liquidity providers (LP), see section \ref{section:exampleLP}.
LPs offer complementary actions that contribute value to the overall system, while potentially suffering from adverse selection. By providing liquidity, they generate value by creating a liquidity premium across the entire market, rather than just for their own shares.

To address the issue of MEV extraction in AMM transactions, various solutions have been proposed, which can be broadly classified into three categories: preserving transaction privacy, auctioning the right to execute user transactions or sets of transactions, and ordering transactions based on their timestamps.
Some proposals, such as \href{https://penumbra.zone/}{Penumbra}, advocate for the use of cryptographic primitives to obscure the information exposed by transactions. Others propose employing Frequent Batch Auctions to mitigate one-block sandwich attacks. 
Another class of solutions is Order Flow Auctions (OFAs). They  allow users to benefit from the value they generate by facilitating an auction for the right to execute an order, with the winning bid amount given to users. The concept of OFA, also known as Payment For Order Flow (PFOF), is a long-standing practice in the traditional finance sector (TradFi), with a history dating back four decades. In crypto, an example would be auction the right to execute a swap transaction that generates a back-running arbitrage opportunity is auctioned to arbitrageurs, and the highest payment is paid back to the user responsible for generating the swap transaction. However, in general, transactions generate more revenue together than separately (i.e. the transactions complement each other) leaving an efficient outcome by auctioning the right of executing the transactions separately. For example, two transactions $A$ and $B$ provide liquidity in two (empty) CFMM with assets $T_1$ and $T_2$. We assume that the token $T_2$ is not listed in any other market maker. If after providing liquidity both CFMM have some price discrepancy, arbitrageurs can extract risk-free profit by atomically buying $T_2$ in one CFMM and selling it in the other one. However, both transactions separately do not generate any MEV opportunity. When not just the right of executing one transaction, but a set of transactions  is auctioned. While calculating the appropriate compensation for users selling their transactions individually is relatively straightforward, determining the fair compensation for bundled transactions remains a complex and unresolved issue. Our goal is to develop a methodology for ascertaining this compensation from first principles. We explore the natural equivalent of MEV-share combinatorial order flow auction \cite{MEVshare} but for welfare instead of profit maximization.



\subsection{Related work}
Since the first work about MEV was by P.Daian et al. \cite{daian2020flash}, miners, searchers and builders have extracted more than $1.1$ American billion dollars. The type of strategies cover a wield range of strategies. From MEV cyclic and Cross-domain arbitrage\cite{angeris2022constant},  sandwich and frontrunning \cite{kulkarni2022towards,wang2022impact},  multi-block MEV opportunities \cite{jensen2023multi}, to Salmonella honeypots \cite{salmonella,mazorra2022not}. In the past, the MEV has been mostly extracted by searchers and validators, however, new protocols such as Proposal-Builder-Separation \cite{PBS,PEPC}, MEV-share \cite{MEVshare}, \href{https://docs.cow.fi/}{Cow-Protocol}, \href{https://github.com/penumbra-zone/penumbra}{Penumbra} and \href{https://skip.money/docs}{Skip} present as tools to give more leverage to the users, leading to more welfare by providing a better execution. However, the complexity of providing efficient rebates is to have approximate welfare maximizing truthful (and so Sybil-proof in pseudo-anonymous systems) combinatorial auctions.

More generally, in game theory and auction theory, Sybil's strategies are usually noted as false-name strategies or shill bids. The first author studying false-name strategies in internet auctions is made in \cite{yokoo2001robust,yokoo2004effect}.
In \cite{yokoo2001robust} the authors present a combinatorial auction protocol that is robust against false-name bids. In \cite{yokoo2004effect}, M. Yokoo et. al. prove that Vickrey–Clarke–Groves (VCG) mechanism, which is strategy-proof and Pareto efficient when there exists no false-name bid, is not false name-proof/Sybil-proof and there exists no false-name proof combinatorial auction protocol that satisfies Pareto efficiency. 
In \cite{iwasaki2010worst}, the authors analyzed the worst-case efficiency ratio of false-name-proof combinatorial auction mechanisms. The authors show that the worst-case efficiency ratio of any false-name-proof mechanism that satisfies some minor assumptions is at most $2/(m+1)$ for auctions with $m$ different goods.
In cite \cite{sher2012optimal}, the author formulates the problem of optimal shill bidding for a bidder who knows the aggregate bid of her opponents. The author proves that if the mechanism is required to be Strategy-proof and false-name proof, then VCG only works with additive valuations.

On the other hand, different mechanisms have been proposed for sharing the surplus generated by players co-producing some value. For example, in \cite{balkanski2015mechanisms} consider model
procurement, where the goal is to optimize a buyer’s utility while paying sellers in a manner that reflects their contribution to the buyer’s utility. In machine learning data marketplaces, different papers such as \cite{o2015shapley,xu2021validation,nguyen2022trade} consider Shapley value as a fair mechanism for payments. This mechanism could (theoretically) be translated to MEV-sharing\footnote{As mentioned by Xinyuan Sun in \url{https://hackmd.io/@sxysun/shapley} and \url{https://github.com/flashbots/mev-research/blob/main/FRPs/active/FRP-30.md}} mechanisms by considering the buyers' utility as the MEV potentially extracted by the builder. While these mechanisms are not credible in general \cite{akbarpour2020credible}, the existence of cryptographic commitments makes it possible to create credible and strategy-proof auctions \cite{chitra2023credible} and so envision potential credible strategy-proof combinatorial OFA.

\subsection{Our contributions}

This paper examines Miner Extractable Value (MEV) rebates, and how they can be effectively managed and manipulated within the landscape of CFMM liquidity providers.

\begin{itemize}
    \item \textbf{Formalization of MEV Rebates} in context of Liquidity providers in Constant function market makers.
    \item \textbf{Desiderata \& the Sybil-Collusion-Deficit Trilemma}: propose a set of properties that would be desirable in an  MEV rebate mechanism: efficiency, symmetry, Sybil-proofness,  marginality, no deficit, Pareto-optimality, and separability.  We show that no mechanism can fulfill all of them. 
    \item \textbf{Limits of Shapley Value Rebate}: We prove that any symmetric, efficient, and Strong monotonic rebate mechanism (Shapley value mechanisms) is weak against Sybil strategies. We present an algorithm for computing the optimal Sybil strategy against this mechanism.
    \item  \textbf{Limitation of CFMM-rebates}: We show some intrinsic limitations of symmetric Sybil-proof mechanism for liquidity provision. We prove that in the prior-free setting, the value that can be rebated in any symmetric and Sybil-proof rebate mechanism decreases exponentially. We show that the natural operator that holds strong monotonicity, symmetry, additivity, and Sybil-proofness runs a deficit and therefore not practical. We present a solution for certain cooperative utility games in the Bayesian setting.
    \item \textbf{Prior-Free Optimal Rebate}: We give an operator that holds Strong monotonicity, Sybil-Proofness, Symmetry, No-deficit, and Separability that is approximately max-min optimal. 
    \item \textbf{Max-min auction rebate}: We construct an auction mechanism that is approximate worst-case optimal welfare under private valuations when there are no conflicting bundles.
\end{itemize}
\section{Example: Liquidity providers rebates}\label{section:exampleLP}
A Constant Function Market Maker (CFMM) is a type of decentralized exchange (DEX) algorithm that uses a predetermined, constant equation to automatically set the price of a token. The primary advantage of CFMMs is that they can provide liquidity to a market and determine prices without needing many buyers and sellers to achieve market efficiency. In a CFMM, the price of a token aligns closely with the off-chain price, due to the activities of arbitrageurs. These actors have the ability to exploit price differences in order to make essentially risk-free profit.

Cross-domain arbitrage refers to the exploitation of price differences between different markets, trading platforms or Blockchains. In this context, we consider a situation where there is a reference price given by a trusted auctioneer.  More formally, given a constant function market maker $\mathcal C = (\varphi,R)$, with reference price $p$, the arbitrage problem can be written as:
\begin{equation*}
\begin{aligned}
& \underset{\Delta}{\text{maximize }}p\cdot \Delta\\
& \text{subject to } \varphi(R+\Lambda_1-\Delta_1)\geq \varphi(R), \\
&\quad\quad\quad\quad\quad p\Lambda_2 = p\Delta_2,\\
&\quad\quad\quad\quad\quad \Delta_1+\Delta_2\geq \Lambda_1+\Lambda_2, \\ 
&\quad\quad\quad\quad\quad \Delta = \Delta_1+\Delta_2.
\end{aligned}
\end{equation*}
For price $p$ and reserve $R$ we can denote $\text{Arb}(p,R)$ the solution of the arbitrage optimization problem. In this context, if the arbitrageur (the builder) is welfare maximizing, a pro-rata rebate system can then be established, where liquidity providers who have taken the risk of providing liquidity are proportionally rewarded based on the profit that arbitrageurs can generate. That is, if the value of the profit is $v$ and the shares of the pool are $s\in\Delta^n$, then, the mechanism pays $p_i=s_iv$ to the player $i$.  For a given vector of shares $(s_1,...,s_n)\in\Delta^n$, reserves $R_1$ and $R_2$, $\text{Arb}$ defines a cooperative games $v:2^{[n]}\rightarrow \mathbb R_{\geq0}$ as $v(S)=\text{Arb}((\sum_{i\in S}s_i)\cdot R,p)$. Since $v$ is a cooperative game, we could consider the rebate to be the Shapley value $\phi$, however, in this context both rebates coincide.

\begin{proposition}\label{prop:CFMM} If the trading function of the CFMM, $\varphi$ is $1-$homogeneous, then $v$ is an additive game. In particular, the Shapley value rebate for liquidity providers coincide with the pro-rata rebate.
\end{proposition}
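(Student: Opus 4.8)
The plan is to reduce the whole statement to a single scaling identity for the arbitrage value: if $\varphi$ is $1$-homogeneous, then $\text{Arb}(\lambda R,p)=\lambda\,\text{Arb}(R,p)$ for every $\lambda\ge 0$. Granting this and writing $\lambda_S:=\sum_{i\in S}s_i$, we obtain $v(S)=\text{Arb}(\lambda_S R,p)=\lambda_S\,\text{Arb}(R,p)=\lambda_S\,v([n])$, where we used $\sum_i s_i=1$ so that $\lambda_{[n]}=1$ and hence $v([n])=\text{Arb}(R,p)$. Since $\lambda_\emptyset=0$ forces $v(\emptyset)=0$, and since $\lambda_S=\sum_{i\in S}\lambda_{\{i\}}$, the set function $v$ is additive with $v(\{i\})=s_i\,v([n])$. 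Finally, the Shapley value of an additive game equals the vector of singleton worths, $\phi_i(v)=v(\{i\})=s_i\,v([n])$, which is exactly the pro-rata rebate $p_i=s_i v$ with $v=v([n])$; this settles the ``in particular'' clause and makes precise the earlier remark that the two rebates coincide.

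For the scaling identity, fix $\lambda>0$ and compare the arbitrage program at reserves $\lambda R$ with the one at reserves $R$. I would show that $(\Delta,\Lambda)$ is feasible for $R$ if and only if $(\lambda\Delta,\lambda\Lambda)$ is feasible for $\lambda R$. The constraints $p\Lambda_2=p\Delta_2$, $\Delta_1+\Delta_2\ge\Lambda_1+\Lambda_2$ and $\Delta=\Delta_1+\Delta_2$ are linear and homogeneous, hence invariant under this joint dilation. For the trading-curve constraint, $1$-homogeneity gives $\varphi(\lambda R+\lambda\Lambda_1-\lambda\Delta_1)=\lambda\,\varphi(R+\Lambda_1-\Delta_1)$ and $\varphi(\lambda R)=\lambda\,\varphi(R)$, so $\varphi(\lambda R+\lambda\Lambda_1-\lambda\Delta_1)\ge\varphi(\lambda R)$ is equivalent, after dividing by $\lambda>0$, to $\varphi(R+\Lambda_1-\Delta_1)\ge\varphi(R)$. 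Because the objective scales as $p\cdot(\lambda\Delta)=\lambda\,(p\cdot\Delta)$, the two suprema differ by the factor $\lambda$, that is $\text{Arb}(\lambda R,p)=\lambda\,\text{Arb}(R,p)$. The case $\lambda=0$ is immediate: with zero reserves only the null trade $\Delta=\Lambda=0$ is feasible, so $\text{Arb}(0,p)=0=0\cdot\text{Arb}(R,p)$.

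The one point I would spell out carefully — and the only place the hypothesis is really exercised — is checking that \emph{every} constraint of the arbitrage program, not just the one involving $\varphi$, is homogeneous of degree one under $(\Delta,\Lambda)\mapsto(\lambda\Delta,\lambda\Lambda)$: this is trivial for the three linear constraints, and it is the trading-curve constraint where $1$-homogeneity of $\varphi$ enters. (The homogeneity assumption is genuinely needed: for a trading function that is not homogeneous of any degree, such as a StableSwap-type curve, $\text{Arb}(\lambda R,p)$ is not proportional to $\lambda$ and $v$ fails to be additive.) One should also record that the program is well posed — the null trade is always feasible with value $0$, so $\text{Arb}\ge 0$, and under the usual CFMM regularity of $\varphi$ the supremum is attained — after which the remaining steps (additivity of $\lambda_S$ over disjoint unions, and the standard fact that the Shapley value of an additive game returns the singleton values) require no further work.
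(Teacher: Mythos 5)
Your proof is correct and follows essentially the same route as the paper's: establish the scaling identity $\text{Arb}(\lambda R,p)=\lambda\,\text{Arb}(R,p)$ from $1$-homogeneity of $\varphi$, conclude that $v(S)=(\sum_{i\in S}s_i)\,\text{Arb}(R,p)$ is additive, and use that the Shapley value of an additive game is the vector of singleton worths, i.e.\ the pro-rata rebate. The only difference is that you verify the feasibility-set scaling constraint by constraint, which the paper asserts without detail; this is a welcome elaboration, not a different argument.
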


This approach can be generalized well when we have a set of CFMM and all tokens are listed in a market maker with infinite liquidity and reference price. However, a challenge arises when the tokens are not listed in an off-chain market maker, such as in the case of single domain arbitrage such as cyclic arbitrage. 
For fairness, we can use the Shapley Value, a concept from cooperative game theory. In this context, it could be used to fairly distribute the profits (or rebates) among the players (or traders) based on their individual contributions to the total profit.

However, this mechanism could face a significant problem: Liquidity Providers (LPs) have incentives to perform a "Sybil attack" on the rebate mechanism. Essentially, they can generate fake tokens and identities to manipulate the system. For example, assuming that there are three different CFMM with tokens A-C, A-B and C-B with three different liquidity providers $i=1,2,3$. If there is an arbitrage opportunity $v$, the Shapley rebate will pay each LP $v/3$ to each player, since there is no arbitrage opportunity without one of the three edges. If the LP of A-B removes the liquidity of the pool and creates two new pools with A-D and D-B, with two new identities, the Shapley rebate will pay $v/4$ to $i=2,3$ and $v/2$ to $i=1$, increasing its profits.
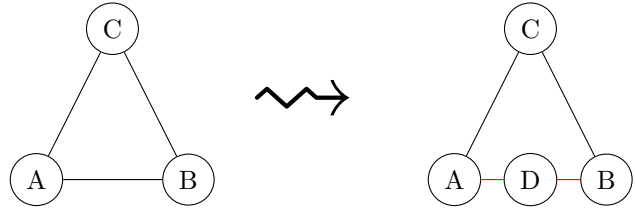
\begin{figure}[!h]
    \centering
    \begin{tikzpicture}
        \begin{scope}
            \node[draw, circle] (A) at (0, 0) {A};
            \node[draw, circle] (B) at (2, 0) {B};
            \node[draw, circle] (C) at (1, 2) {C};

            \draw (A) -- (B);
            \draw (B) -- (C);
            \draw (C) -- (A);
        \end{scope}

        \node at (3.5, 1) {\scalebox{4}[4]{$\leadsto$}};

        \begin{scope}[xshift=5.5cm]
            \node[draw, circle] (A) at (0, 0) {A};
            \node[draw, circle] (B) at (2, 0) {B};
            \node[draw, circle] (C) at (1, 2) {C};
            \node[draw, circle] (D) at (1, 0) {D}; 

            \draw [red] (A) -- (D);
            \draw[red] (D) -- (B); 
            \draw (B) -- (C);
            \draw (C) -- (A);
        \end{scope}
    \end{tikzpicture}
    \caption{The liquidity provider of the CFMM A-B creates a token D and adds two CFMM A-D and D-B.}
    \label{fig:my_label}
\end{figure}
In such a scenario, this rogue player can cannibalize the arbitrage opportunities of other players. This means they can strategically take a larger share of the available profits, undermining the fairness properties of the Shapley Value-based distribution system. In other words, the Shapley rebate mechanism is not robust in this setting. And also, this player would reduce the efficiency of the system by splitting its liquidity across multiple DEXs, due increasing of gas costs to trade from A to B.

To address this issue, in this paper, propose a solution, aiming to prevent such strategic behaviour and ensure a fair, robust, and efficient system for all participants in the CFMM environment. The solution will need to create strong disincentives for false-name and Sybil strategies, while preserving the attractiveness of the system for honest participants.
\section{MEV Preliminary}\label{section:preliminar}
In this paper, we will use the theoretical framework of \cite{babel2021clockwork,mazorra2022price}. In both paper, they introduce a similar formula to define the maximal extractable value. We will denote $\mathcal D$ as domain with the same definition as \cite{obadia2021unity}, i.e.  $\mathcal D$ is a self-contained system with a globally shared state $\texttt{st}$. This state is altered by
various agents through actions (sending transactions, constructing blocks, slashing, etc.), that execute within a shared execution environment’s semantics. With an abuse of notation, we will sometimes denote by $\mathcal D$ a union of domains. This will be useful to consider other forms of MEV, such as cross-domain arbitrage.

\begin{definition} A set of transactions $\mathcal T=\{\texttt{tx}_1,...,\texttt{tx}_k\}$ and a state $\texttt{st}$ induce an MEV opportunity in a domain $\mathcal D$ to a player $P$ if they can construct an ordered set of transactions $B$ such that:
\begin{equation}
    \Delta u_P(\texttt{st}\circ B,\texttt{st}) := u_P(\texttt{st}\circ B)-u(P,\texttt{st}) > 0,
\end{equation}
where $u_P$ is the utility function of player $P$. We call $B$ a \textit{profitable bundle} or \textit{bundle}. If $B$ consists of a unique transaction, we say that $B$ is a MEV\textit{-transaction}. For a given state $\texttt{st}$, each bundle incurs execution costs called gas costs $\textbf{g}(B)$. In general, we will assume that player that players have a linear utility in terms of the balance of a fixed asset that we will call numéraire, leading to the following definition that we will use for MEV.
\end{definition}

\begin{definition} Let $\mathcal D$ be a domain with state $\texttt{st}$, a player $P$ with local mempool view $\mathcal T^M_P$ and a set of transactions $\mathcal T_P$ that the player $P$ can construct. We denote by $\mathcal C_P=\mathcal T^M_P\cup \mathcal T_P$ to be the set of reachable transactions. We define the \textit{local} MEV \textit{of $P$ with state $\texttt{st}$ } ($\text{MEV}_P(\texttt{st})$) as the solution to the following optimization problem
\begin{align*}
& \underset{B}{\text{max}} \: \Delta b(P;\texttt{st}\circ  B,\texttt{st}) \\
& \text{s.t. } B\subseteq \mathcal C_P \\
& \text{and } \texttt{st}\rightarrow\texttt{st}\circ  B\text{ is a valid state transition in $\mathcal D$}
\end{align*}

Let $\text{argmev}_P(\texttt{st})$ be the set of bundles that are a solution to the optimization problem. The constraints of reachable bundles are subject to a player's information, gas efficiency, budget, ability to propose blocks, etc.
\end{definition}

As stated, maximal extractable value (MEV) represents the highest potential profit that can be obtained by strategically reordering, including, censoring transactions  or exploiting inefficiencies within blockchain ecosystems. Various forms of MEV exist, including internal arbitrage between constant function market makers (CFMMs), cross-domain arbitrage, and sandwich attacks. Internal arbitrage within CFMMs occurs when traders capitalize on price discrepancies between different CFMM-based decentralized exchanges (DEXs), leveraging the constant product formula to generate profit. Cross-domain arbitrage entails exploiting price differences across distinct blockchain networks, layer-2 solutions or centralized exchanges, as assets can exhibit divergent valuations in different environments. Sandwich attacks involve an informed and sophisticated actors strategically placing trades before and after a target transaction within a single block, effectively manipulating prices to their advantage and extracting value from the victim's transaction.

In general, Maximal Extractable Value rebates or MEVR for short, are a type of incentive paid by builders or sequencers to users that relay transactions that generate some surplus that the builder can extract. MEV rebates are paid out of the value generated by the transaction, and the amount of the rebate is typically determined by the MEV extracted by the builder and its paying policy. We will discuss desirable axioms of this paying policies and show the limit of these in users welfare. 

While the previous definition is able to capture these ideas, we have to extend it to incorporate the marginal contribution of players that have changed the state in previous blocks and contribute the MEV extracted. An example is liquidity providers of a constant function market makers. In this case, if a CFMM leaves a CEX-DEX arbitrage the marginal contribution of its liquidity to the Maximal extractable value, is at least, the value of this arbitrage opportunity. To capture this we will use 

To formalize MEV distribution and MEV rebates, we could use the notation introduced in \cite{mazorra2022price}. However, we can have more general definition. Let $\mathcal S$ be the set of states $\mathcal T$ the set of transactions or identities that generated state transitions in previous blocks and $\mathcal B=2^{\mathcal T}$ the set of subset of $\mathcal T$.

\textbf{Assumption 1 (Perfect MEV oracle)}: For each set of transactions $\mathcal T'$  and state $\textbf{st}$ the sequencer has access to a ``perfect MEV oracle", that is, he can efficiently compute the MEV for every set of bundles. It is well known that computing the MEV is at least an NP-problem \cite{babel2021clockwork,mazorra2022price,angeris2021note}. However, in this paper, we want to upper bound the efficiency of MEV-rebates, and so this hypothesis will not have an impact on the results. We can think of this perfect MEV oracle as non-collusive searchers in perfect market competition bidding in each state and set of bundles.
So, the sequencer can compute $\text{MEV}(\texttt{st},\mathcal P)$ where $\mathcal P=\{B_i\}_{i\in I}$. We will write $\text{MEV}$ as a map $v:\mathcal S\times 2^{\mathcal B}\rightarrow\mathbb R$. 
Observe that for a given state $\texttt{st}$ we can restrict $v$ to $\texttt{st}\times 2^{\mathcal B}$ and interpret this restriction as a \textit{coalitional form game} on a finite set of players $N$.

\textbf{Assumption 2 (Completness state-MEV)}: Let $\mathcal P$ be a mempool, i.e. a set of bundles. For every monotonic map $v:2^{\mathcal P}\rightarrow \mathbb R$, there exist a valid state $\texttt{st}\in\mathcal S$ such that $\text{MEV}(\texttt{st},\cdot)_{|2^{\mathcal P}}=v(\cdot)$.
\begin{definition} A \textit{rebate value operator} or \textit{value} operator for short is a function that assigns each valuation $v:2^{N}\rightarrow \mathbb R$ to a vector of $(\varphi_1(v),...,\varphi_n(v))$ of $\mathbb R^n$. The element $\phi_i(v)$ stands for the $i$'s rebate payment.
\end{definition}
Shapley values \cite{winter2002shapley} are a mathematical concept used to allocate the total value generated by a group of players to each individual player in a cooperative game. The Shapley value of a player is based on the player's contribution to the total value generated by the group. In the context of MEV rebates, Shapley values seems a natural solution to be used to allocate the value generated by a bundle to the different participants involved in the block. However, as we have seen in \ref{section:exampleLP}, this is in general not a good idea, since the player can strategically create multiple transactions or bundles to increase their profit. We will explain this in the following section with more detail.
\section{Sybil-proof rebates}\label{section:LPrebates}

In this paper, we explore the integration of Sybil-proof or false-name proof mechanisms with Shapley values in order to evaluate the marginal contribution of an identity to a coalition in scenarios where players can report multiple identities to maximize their payoffs. The aim is to develop a robust and "fair" framework for apportioning value among agents while preventing the manipulation of the outcome through the use of multiple false identities. In this first section we will assume that players have a fixed contribution to the coalition and the set of actions is colluding with other players and splitting into different identities. In this scenario, we will explore mechanisms where players want to truthfully reveal their identity to the mechanism. We will immediately see that this is impossible to have a mechanism that is Symmetric, Sybil-proof and collusion-proof, and so, we will explore in more depth the limits of Symmetric and Sybil-proof mechanisms. In other words, in general, we will just take into account mechanisms where players can misreport the number of identities creating Sybils, but not collude.

\subsection{Shapley value rebate}

To begin, we introduce the properties and the formula of Shapley values. The Shapley value is a widely recognized solution concept in cooperative game theory, proposed by Lloyd Shapley in 1953, which fairly distributes the gains obtained from a coalition of players. It is based on four key properties: efficiency, symmetry, null player neutral, and additivity.

Given a cooperative game $(N,v)$, where $N = \{1,2,\dots, n\}$ is the set of players and $v: 2^N \rightarrow \mathbb{R}$ is the characteristic function that assigns a real value to every coalition $S \subseteq N$, the Shapley value $\phi_i(v)$ for player $i$ can be defined as follows:

\begin{equation}
\phi_i(v) = \sum_{S \subseteq N \setminus {i}} \frac{|S|!(|N|-|S|-1)!}{|N|!} \cdot [v(S \cup {i}) - v(S)]
\end{equation}

Where $|S|$ represents the cardinality of set $S$, and $|N|$ is the total number of players in the game. The Shapley value $\phi_i(v)$ represents the expected marginal contribution of player $i$ to the coalition, averaging over all possible orders in which the players can join the coalition. The Shapley value have the following properties:

\begin{itemize}
    \item \textbf{Efficiency (E)}: The sum of the Shapley values for all players is equal to the total value of the grand coalition, i.e., $\sum_{i \in N} \phi_i(v) = v(N)$. This ensures that the entire value generated by the coalition is distributed among the players without any surplus or deficit.

    \item \textbf{Symmetry (S)}: If two players $i$ and $j$ are interchangeable, i.e., for any coalition $S \subseteq N \setminus \{i, j\}$, $v(S \cup \{i\}) = v(S \cup \{j\})$, then their Shapley values are equal, i.e., $\phi_i(v) = \phi_j(v)$. This property ensures that players contributing equally to the coalition receive equal rewards.

    \item \textbf{Null player (N)}: If a player $i$ does not contribute to any coalition, i.e., for any coalition $S \subseteq N \setminus \{i\}$, $v(S \cup \{i\}) = v(S)$, then the Shapley value of that player is zero, i.e., $\phi_i(v) = 0$. This property ensures that players who do not contribute to the coalition do not receive any payoff.

    \item \textbf{Additivity (A)}: Given two cooperative games $(N,v)$ and $(N,w)$ with characteristic functions $v$ and $w$ respectively, the Shapley value of their sum is equal to the sum of their individual Shapley values, i.e., $\phi_i(v+w) = \phi_i(v) + \phi_i(w)$ for every player $i \in N$. This property allows for the separate computation of Shapley values in different subgames and their subsequent aggregation.
\end{itemize}
Additionally, as demonstrated in the seminal work by Dubey, Shapley, and Neyman \cite{dubey1975uniqueness}, the Shapley value stands out as the unique operator that simultaneously satisfies these four properties within the domain of cooperative games defined by characteristic functions $v: 2^N \rightarrow \mathbb{R}$. The uniqueness result relies on a rigorous axiomatic analysis of cooperative game solutions, establishing the Shapley value as the sole solution concept that adheres to the principles of efficiency, symmetry, null player property, and additivity across all possible cooperative games. Consequently, this underpins the significance of the Shapley value in fairly distributing the gains obtained from a coalition of players and highlights its robustness as an equitable solution for cooperative game scenarios. 

The Shapley value holds other important properties:
\begin{itemize}
    \item An operator $\varphi$ holds \textbf{Marginality (M)} if for every pair of cooperative games $v,w$ such that $v(S\cup\{i\})-v(S) = w(S\cup\{i\})-w(S)$, then $\varphi_i(v) = \varphi_i(w)$.
    \item  An operator $\varphi$ is \textbf{Strong monotonicity (SM)} if for every pair of cooperative games $v,w$ such that $v(S\cup\{i\})-v(S)\geq w(S\cup\{i\})-w(S)$, then $\varphi_i(v)\geq \varphi_i(w)$.
\end{itemize}
Observe that if an operator holds \textbf{(SM)}, then holds \textbf{(M)}.
Young \cite{chun1989new} showed proved that there is a unique strong monotonic, symmetric and efficient operator and that this is the Shapley value. Different axiomatic approaches \cite{winter2002shapley} have been proposed for characterizing the Shapley value as a solution concept for cooperative games.

\subsection{Sybil-Proof rebates}
In the context of our paper, we aim to extend the Shapley value concept to account for the presence of multiple identities (i.e., Sybil attacks or false-name manipulations) and to devise a method for fairly computing each player's marginal contribution to a coalition, while discouraging the use of multiple identities for maximizing payoffs. 
In this scenario, the payment mechanism will consist of a map $\varphi:\bigcup_{n\geq1} TU^n_m\rightarrow \mathbb R_+^\infty$, where $TU^n_m=\{v:2^{[n]}\rightarrow \mathbb R_+|\, v\text{ monotone}\}$. The set of strategies set by players will consist of reshaping the games $v$ by extending them or reducing them in order to obtain a higher revenue, modeled by the payment mechanism $\varphi$.

The \textbf{Symmetric games} are defined in the following way. Let $R\subseteq N\setminus\{\emptyset\}$, we define $w_R:$
\begin{equation*}
    w_R(S) = \begin{cases}
                1,\text{ if }R\subseteq S,\\
                0,\text{ otherwise}.
            \end{cases}
\end{equation*}
The set of symmetric games forms a base of all cooperative games, see the appendix.

Given a cooperative game $([n],v)$, some player $i\in N$ and, some natural number $k\in\mathbb N$, we will define different Sybil extensions games as a cooperative game $([n+k],\tilde{v})$. The first, and the most intuitive one, is the one where a player replicates each identity $k$ times. 
Each one of these identities contributes exactly the same to the coalitions as the original identity. More formally, let $K=[n+k]\setminus[n]$ the Sybil extension cooperative copy game is defined as
\begin{equation*}
    v_c^K(S) = \begin{cases} v((S\setminus K)\cup \{i\}),\text { if } S\cap K\not=\emptyset,\\ v(S),\text{ if } S\cap K=\emptyset.\end{cases}
\end{equation*}
Another natural Sybil extension cooperative split game is given by the extension where each identity is dummy isolated and aggregated, they act as the original identity. More formally, 
\begin{equation*}
    v_s^K(S) = \begin{cases} v((S\setminus K)\cup \{i\}),\text { if } K\cup\{i\}\subseteq S,\\ v(S\setminus (K\cup\{i\})),\text{  otherwise}.\end{cases}
\end{equation*}

A set $\mathcal A \subseteq \bigcup_{n\geq1}\{v:2^{[n]}\rightarrow\mathbb R\}$ is Sybil extension closed with respect a Sybil extension (copy or split) if $\mathcal A$ is closed under this operation. More formally, if $v\in\mathcal A$, then $v^K\in \mathcal A$. Observe that the space of monotone functions and the space of super-modular functions are closed under the split operation. Except stated otherwise, from now on we will assume that every set function $v$ is monotone.

In the context of transactions or bundles, this is equivalent to breaking one transaction or bundle into different transactions, where both transactions need to be executed in order to make MEV. This strategic behavior can be done in various ways. One way of doing so is by creating a variable \texttt{Execute} set to \texttt{False}. Then, one sets \texttt{Execute} as \texttt{True} and the other one is equivalent to the original one but checks if \texttt{Execute} is \texttt{True}. This, can clearly be generalized to an unlimited number of transactions by defining more variables. 

More abstractly, we propose a similar definition of Sybil-proof to \cite{mazorra2022price,nowak1997axiomatization}.
\begin{itemize} 
 \item An operator $\varphi$ is \textbf{2-Efficient} if for every $i$, it holds that
 \begin{equation*}
     \varphi_i(v_c^K)+\varphi_{n+1}(v_c^K) = \varphi_{i}(v).
 \end{equation*}
 for $K=\{n+1\}$.
 \item An operator $\varphi$ is \textbf{(Ex-post) Sybil-proof (SP)}\footnote{From now on we every time we write Sybil-proof we will mean ex-post Sybil-proof} if, for every player $i$, cooperative game $v$ such that $\varphi_i(v)\geq0$, and any cooperative Sybil-extension with Sybil set $K$, we have that:
\begin{equation*}
    \varphi_i(v)\geq \max\left\{\sum_{j\in K\cup\{i\}}\varphi_j(v_c^K),\sum_{j\in K\cup\{i\}}\varphi_j(v_s^K)\right\}.
\end{equation*}
 In the literature, a more general property that takes into account any type of Sybil (that is the Sybils are not necessarily Copies or Splits) is usually denoted as super-additive, see \cite{nowak1997axiomatization}. Formally, is defined as an operator $\varphi$ is \textbf{(Ex-post) General Sybil-Proof (GSP)} if for every game $v:2^N\rightarrow\mathbb R$ and set of players $K\subseteq N$
 \begin{equation*}
    \sum_{i\in K} \varphi_i(v)\leq \varphi_p(v_p)
 \end{equation*}
 where $v_p:2^{N\setminus K\cup \{p\}}\rightarrow\mathbb R$ is the reduced game defined as $v_p(S)=S$ if $p\not\in S$ and $v_p(S)=v(S\setminus\{p\}\cup K)$ if $p\in S$.
A dual version of this concept is Collusion-proofness, that is, no subset of players is better off by pretending to be one. 
 \item Formally, an operator is \textbf{(Ex-post) Collusion-proof} if for a given a game $v$ and a subset of players $S\subseteq N$, holds 
 \begin{equation*}
     \sum_{i\in S}\varphi_i(v)\geq\varphi_p(v_S).
 \end{equation*}
    where $v_S:2^{N\setminus S\cup\{p\}}\rightarrow \mathbb R$ is a set function defined as $v_S(L)=L$ if $L\subseteq N\setminus S$ and $v_S(L)=v(L\cup S)$ if $p\in L$. The element $p$ represents the coalition of the players. 
\end{itemize}
In this work, we will sacrifice this last property due to the following lemma.
\begin{lemma}[Sybil/Collusion/Symmetry Trilemma]\label{trilemma}There is no non-trivial symmetric, Collusion-proof, and General Sybil-proof rebate mechanism.
\end{lemma}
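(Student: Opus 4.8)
The plan is to derive a contradiction from simultaneously assuming symmetry, collusion-proofness, and general Sybil-proofness on a non-trivial mechanism. First I would observe that combining (GSP) and collusion-proofness forces the mechanism to be \emph{2-efficient} in a strong sense: for any game $v$ on $N$, a set $K$, and a ``merged'' player $p$, we have $\sum_{i\in K}\varphi_i(v) \le \varphi_p(v_p)$ from (GSP), while collusion-proofness gives the reverse inequality $\sum_{i\in S}\varphi_i(v) \ge \varphi_p(v_S)$; since the reduced games $v_p$ and $v_S$ are built by the same collapse-into-$p$ recipe, these two bounds pin down $\varphi_p(v_p) = \sum_{i\in K}\varphi_i(v)$, i.e. splitting/merging conserves total payoff exactly. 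So the mechanism is invariant under arbitrary coarsening of the identity set.

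Next I would exploit symmetry together with this invariance on a carefully chosen small instance. Take the unanimity-type game $w_R$ with $R = N$ on two players (the game that pays $1$ only to the grand coalition). By symmetry, $\varphi_1(w_{\{1,2\}}) = \varphi_2(w_{\{1,2\}})$. Now merge the two players into one player $p$: the reduced game $v_p$ on the singleton $\{p\}$ is the game that assigns $1$ to $\{p\}$, i.e. a one-player unanimity game, and by non-triviality together with symmetry on the one-player space this value must be some fixed constant $c$ (in fact, applying efficiency-type reasoning or just symmetry across singleton games, $\varphi_p(v_p)$ is determined). The conservation identity then says $\varphi_1(w_{\{1,2\}}) + \varphi_2(w_{\{1,2\}}) = \varphi_p(v_p)$, hence each player gets $\varphi_p(v_p)/2$. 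The contradiction comes from iterating: splitting $p$ into $n$ identities in the $n$-player unanimity game $w_{[n]}$ forces each to receive $\varphi_p(v_p)/n \to 0$, yet splitting it the \emph{other} direction — or considering a game where $p$ is split into two identities one of which is a genuine null player relative to the collapsed structure — forces, via the null-player-like behaviour implied by (GSP) applied to a dummy, an inconsistent value. Concretely I would pick two different split patterns of the same coalitional structure whose conservation constraints, combined with symmetry within each pattern, are mutually unsatisfiable unless all payoffs are zero, contradicting non-triviality.

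The cleanest route, which I would try first, is: assume non-trivially $\varphi_p(v_0) = a \neq 0$ for the one-player game $v_0$ with $v_0(\{p\})=1$; by the conservation identity, splitting $p$ into $k$ copies in the copy-extension $v_c^K$ gives total $a$ shared equally (symmetry) as $a/k$ each; but splitting $p$ instead via the \emph{split}-extension $v_s^K$ — where all $k$ identities must be present together to generate value — is a unanimity game $w_{K\cup\{i\}}$ on $k+1$ players, still with total $a$ by conservation, so each gets $a/(k+1)$; so far consistent. The real pinch is using (GSP) with a \emph{mixed} collapse: collapse only part of a larger coalition. I expect the main obstacle to be bookkeeping which reduced game ($v_p$ in the GSP definition versus $v_S$ in collusion-proofness) applies and checking that the monotonicity/non-negativity side conditions in the definitions of (SP), (GSP) and collusion-proofness are actually met on the instances I construct — the definitions as stated carry hypotheses like ``$\varphi_i(v)\ge 0$'' and domain restrictions to monotone games, so I must ensure every game I feed in is monotone and every invoked inequality is licensed. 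Once the conservation-under-arbitrary-regrouping identity is established and symmetry is applied to a unanimity game and its one-player collapse, the contradiction with non-triviality is immediate.
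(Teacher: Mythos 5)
Your first step is sound: with the paper's definitions the reduced games $v_p$ (in GSP) and $v_S$ (in collusion-proofness) are literally the same construction, so combining the two inequalities does yield exact conservation $\sum_{i\in K}\varphi_i(v)=\varphi_p(v_K)$ under any merge. The gap is in what you do with it. Your claimed finish --- that once symmetry is applied ``to a unanimity game and its one-player collapse the contradiction with non-triviality is immediate'' --- is not a contradiction at all, and your own computation in the final paragraph shows it: every player of $w_{[n]}$ receiving $a/n$ with total $a$ (where $a$ is the payoff in the collapsed one-player game) is perfectly consistent for every $n$; the Shapley value itself behaves exactly this way on unanimity games. The decisive step, which you only gesture at as ``the real pinch \dots a mixed collapse,'' is never carried out. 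It does work, and it is where the whole proof lives: merge $k$ of the $n$ players of $w_{[n]}$ with $2\le k<n$. The reduced game is again a unanimity game, now on $n-k+1$ players, so conservation, symmetry, and conservation of its own full collapse force $\frac{k}{n}\,a=\frac{1}{n-k+1}\,a$; already $n=3$, $k=2$ gives $\frac{2a}{3}=\frac{a}{2}$, hence $a=0$ (and the same computation applied to $R\,w_{[n]}$ kills every one-player game). The full-merge identity applied to an arbitrary monotone game, together with nonnegativity of the rebate vector (the codomain is $\mathbb R_+^\infty$), then gives $\varphi\equiv 0$, contradicting non-triviality. Without this partial-merge computation your argument does not close, and the worry you flag about which side conditions are licensed is not the real obstacle.

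It is also worth noting that this is not the paper's route. The paper never uses exact conservation: it plays the exponential Sybil bound $\sum_{i}\varphi_i(w_N)\le R\,n/2^{n-1}$ (derived from symmetry plus Sybil-proofness under splitting, Eq.~\eqref{eq:spoptimal}) against the collusion-proof lower bound $\sum_i\varphi_i(w_N)\ge\varphi_1(v)$, lets $n\to\infty$ to conclude $\varphi_1(v)=0$ for one-player games, and then uses GSP to propagate triviality to all monotone games. If you complete the partial-merge step above, your argument becomes a correct and genuinely more elementary alternative: it needs only a single three-player instance and no asymptotics or the $n/2^{n-1}$ decay lemma, at the price of using both GSP and collusion-proofness as an exact equality rather than two one-sided bounds. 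As written, however, the proposal is missing its key step and asserts an ``immediate'' contradiction where none exists.
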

In other words, if we want a Symmetric rebate operator, we must either sacrifice Collusion-proofness or Sybil-proofness. Since in pseudo-anonymous environments is easier to Sybil-attack than collude, see \cite{mazorra2023cost}, in the following we will just study the Sybil-proof mechanisms.

The Shapley value is, in general, not Sybil-proof under copy extension. 
For example, consider the symmetric game $w_R$. The shapley value of a player $i\in R$ is $\phi_i(w_R)=1/|R|$. If a player generates $k\geq2$ copies in total (counting $i$), then total payoff  of this extended game $w_{R\cup K}$ is $\sum_{j\in K}\phi(w_{R\cup K})= k/(|R|+k)>1/|R|$. Therefore, the Shapley value is not Sybil-proof. More generally, if a game $v$ can be expressed in terms of symmetric games with positive coefficients, i.e. $v = \sum_{R\subseteq N} c_R w_R$ with $c_R\geq0$, then there is no optimal Sybil copy strategy and the supreme is realized when $k\rightarrow+\infty$. 
\begin{theorem} \label{theorem:not_sybil} The Shapley value is not Sybil-proof. In other words, there is no operator that holds efficiency, symmetry, null player, additivity, and Sybil-proofness. And also, there is no operator that holds symmetry, marginality, efficiency, and Sybil-proofness.
\end{theorem}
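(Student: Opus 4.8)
The plan is to reduce both ``in other words'' claims to the single statement that the Shapley value is not Sybil-proof, and to establish that by exhibiting an explicit symmetric (unanimity) game on which the split Sybil extension is strictly profitable. First I would fix $R\subseteq N$ with $|R|\geq 2$, a player $i\in R$, and a nonempty Sybil set $K$ (disjoint from $N$ by construction), and compute the split extension of $v=w_R$ straight from the definition: when $K\cup\{i\}\subseteq S$ one has $(S\setminus K)\cup\{i\}\supseteq R$ iff $R\cup K\subseteq S$, and in the complementary branch the set $S\setminus(K\cup\{i\})$ no longer contains $i\in R$, so it is worthless; hence $v_s^K=w_{R\cup K}$, a unanimity game on the $|R|+|K|$ pairwise-distinct players of $R\cup K$. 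By Symmetry and Efficiency the Shapley value of $w_{R\cup K}$ is $\tfrac{1}{|R|+|K|}$ on each member of $R\cup K$, so
\[
\sum_{j\in K\cup\{i\}}\phi_j(v_s^K)=\frac{|K|+1}{|R|+|K|}>\frac{1}{|R|}=\phi_i(w_R),
\]
the inequality being equivalent to $(|R|-1)|K|>0$. Since $\phi_i(w_R)=1/|R|\geq 0$ and (SP) requires $\varphi_i(v)$ to dominate the \emph{maximum} of the copy total and the split total, this one inequality already refutes (SP); the minimal witness is $R=\{1,2\}$, $K=\{3\}$, where $\phi_1(w_{\{1,2\}})=\tfrac12$ while $\phi_1(w_{\{1,2,3\}})+\phi_3(w_{\{1,2,3\}})=\tfrac23$.

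Then I would invoke the two classical characterizations to obtain the axiomatic formulations. By Dubey--Shapley--Neyman \cite{dubey1975uniqueness}, any operator with Efficiency, Symmetry, Null player and Additivity is the Shapley value; by Young \cite{chun1989new}, so is any operator with Efficiency, Symmetry and Marginality. Restricting to the cone of monotone games does not affect this: one still writes $v=\sum_R c_R w_R$, splits it as a difference of two monotone games (the positive and negative dividend parts), and recovers $\varphi_i(v)=\sum_R c_R\varphi_i(w_R)$ from additivity, while Young's inductive argument likewise goes through --- and in any case the two games $w_R$ and $w_{R\cup K}$ used above are monotone, and $\{$Symmetry, Efficiency, Null$\}$ (resp.\ $\{$Symmetry, Efficiency, Marginality$\}$) already forces the operator to agree with $\phi$ on exactly these games. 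Either way an operator with one of the two axiom sets coincides with $\phi$, which is not Sybil-proof, so no operator can satisfy $\{$E, S, N, A, SP$\}$ and none can satisfy $\{$S, M, E, SP$\}$.

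The main obstacle, such as it is, is bookkeeping rather than conceptual: one must compute the split extension $v_s^K$ of $w_R$ correctly (the key point being that the ``otherwise'' branch vanishes because it erases $i$) and be careful that the uniqueness results are applied on the domain of monotone games on which the rebate operators live --- a subtlety that can be side-stepped entirely, as noted, by only using the values of the operator on the unanimity games $w_R$ and $w_{R\cup K}$, which the axioms pin down directly. I do not anticipate any genuine difficulty.
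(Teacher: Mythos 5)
Your proof is correct and follows essentially the same route as the paper: the paper's own argument for the split case is exactly your unanimity-game computation ($\phi_i(w_{[n]})=1/n$ versus $(k+1)/(n+k)$ after splitting), and since (SP) requires dominating the maximum of the copy and split totals, your single split counterexample suffices, making the paper's separate copy-extension example (the all-ones game giving $2/(n+1)>1/n$) redundant. Your explicit appeal to the Dubey and Young uniqueness theorems, together with the observation that symmetry, efficiency and null player (resp.\ marginality) already pin the operator to $1/|R|$ on the unanimity games used, merely spells out what the paper leaves implicit, and correctly handles the restriction to monotone games.
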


Therefore, by this theorem, in order to have a Sybil-proof operator, we need to sacrifice at least one of the other properties. Following similar arguments presented in \cite{mazorra2023cost}, we will construct an operator that shrinks the payoffs for each new identity reported. As stated before, if the operator is additive is determinate by its image in the elements of the base. Similar to the proposition $2.3$ of \cite{mazorra2023cost}, we have that a symmetric and null-player operator in the symmetric games holds:
 \begin{equation}\label{eq:spoptimal}
  \varphi_i(w_R) \quad\leq\quad \begin{cases}
                    \frac{1}{2^{\mid R\mid -1}},\text{ if }i\in R,\\
                    0,\text{ otherwise.}
                \end{cases}
 \end{equation}
For completeness, the proof goes as follows:
Let $r(n)=\sum_{i\in R}\varphi_i(w_R)$ be the total reward split among $n=|R|$ players (by symmetry this is well-defined since the total value just depends on the number of players). So if a player $i$ is in $R$, then the value obtained is $r(n)/n$. Since $\varphi$ is  Sybil-proof under spliting extension, we have that  $y\geq1$, $r(1+y)/(1+y)\geq 2r(2+y)/(2+y)$, since generating two Sybils is less profitable. And so, we have that
\begin{equation*}
    r(1+n)\leq \frac{r(n)}{2}\frac{n+1}{n},\text{ for }n\geq 1.
\end{equation*}
And so, recursively we deduce that 
\begin{equation*}
 r(1+n)\leq \frac{r(1)}{2^{n}}\prod_{k=1}^{n}\frac{k+1}{k} = \frac{1}{2^{n}}(n+1)
\end{equation*}
Since $r(1)=1$, we deduce that $r(n)\leq \frac{n}{2^{n-1}}$. Now, by symmetry every player in $R$ obtains the same value, and so $\varphi_i(w_R) = r(n)/n = \frac{1}{2^{n-1}}$. $\square$

If the equality holds for all $i$ and $R\not=\emptyset$, we say that the operator is \textbf{Sybil-proof-optimal (SPO)}. Note that this bound holds due to the lack of the prior-information of the auctioneer. In case where the number of players is drawn from a distribution $\mathcal D$, the optimal (ex-ante) Sybil-proof rebate for symmetric games (full symmetric complementarity), the total welfare can be improved.

 \textbf{Prior-optimal rebate mechanism with symmetric full-complementarity}: Now, let's assume that the distribution of number of players $\mathcal D$ is common knowledge. And that agents completely complement each other, i.e. they generate a symmetric game. The mechanism designer objective is to find a symmetric mechanism that maximizes the expected welfare.

Let $\mathcal D'$ be the distribution of $\mathcal D$ conditioned to $\mathcal D\geq 1$. We write $p_n = \Pr [\mathcal D'=n]$. Our objective is to find a Sybil proof symmetric mechanism that maximizes the expected sum of payments, i.e. the welfare. We want to split part of the total value $R\in\mathbb R_{\geq0}$ generated by the complementarity among the agent. We can model the distribution with a vector $X\in\mathbb R^\infty$ such that if there are a total number of $n$ players reported, each player receives $X_n$. Therefore, if there are $n$ identities (already) reported and a player reports $k$ identities more, he obtains $k\cdot X_{k+n}$. Our objective is to find $X$ such that:
\begin{enumerate}
    \item (Ex-ante Sybil-proof) All players reporting one identity is an equilibrium (Weaker than Syibil-proof), formally
    \begin{equation}
    \mathbb E_{n\sim\mathcal D'}[X_{n}]\geq \mathbb E_{n\sim\mathcal D'}[y\cdot X_{y-1+n}]\text{ for all }y\geq 1.
    \end{equation}
    \item (Budget Balance) For every reported number of identities $n$, it holds $n\cdot X_n\leq R$.
    \item (Efficiency) $X$ maximizes the expected welfare of all players in equilibrium. That is 
\begin{equation}
X^* = \text{argmax}_{X}\quad\mathbb E_{n\sim\mathcal D'}[n\cdot X_{n}]
\end{equation}
\end{enumerate}
And so, we can rewrite this as the Linear optimization problem
\begin{equation*}
\begin{aligned}
& \underset{X}{\text{maximize }}\sum_{n=1}^{+\infty}n\cdot X_n\cdot p_n\\
& \text{subject to } 0\leq X_n\leq R/n\text{ for all }n\geq1\\
& \quad\quad\quad\quad\sum_{n=1}^\infty (X_n-X_{y-1+n}\cdot y)p_{n}\geq0\text{ for all }y\geq2
\end{aligned}
\end{equation*}
If we denote  the welfare by $W_n$ with $n$ reported players, we have that $W_n =n X_n$ and so the optimization problem can be rewritten as:
\begin{equation*}
\begin{aligned}
& \underset{W}{\text{maximize }}\sum_{n=1}^{+\infty}W_n\cdot p_n\\
& \text{subject to } 0\leq W_n\leq 1\text{ for all }n\geq1\\
& \quad\quad\quad\quad\sum_{n=1}^\infty \left(W_n/n-W_{y+n-1}y/(y+n-1)\right)p_{n}\geq0\text{ for all }y\geq2
\end{aligned}
\end{equation*}
As we will show in the figure \ref{fig:prior}, the optimal prior is in general strictly higher than the prior-free optimal rebate for full complementary transactions. An easy example to show this is when $p_n=1$ for some $n$, since $X_n=1$, $X_k=0$ for $k\not=0$ is a solution of the optimization problem and has welfare $1$. \footnote{Code of the following figure is available in \url{https://github.com/BrunoMazorra/COFA.git}.}
\begin{figure}
    \centering
    \includegraphics[scale=0.4]{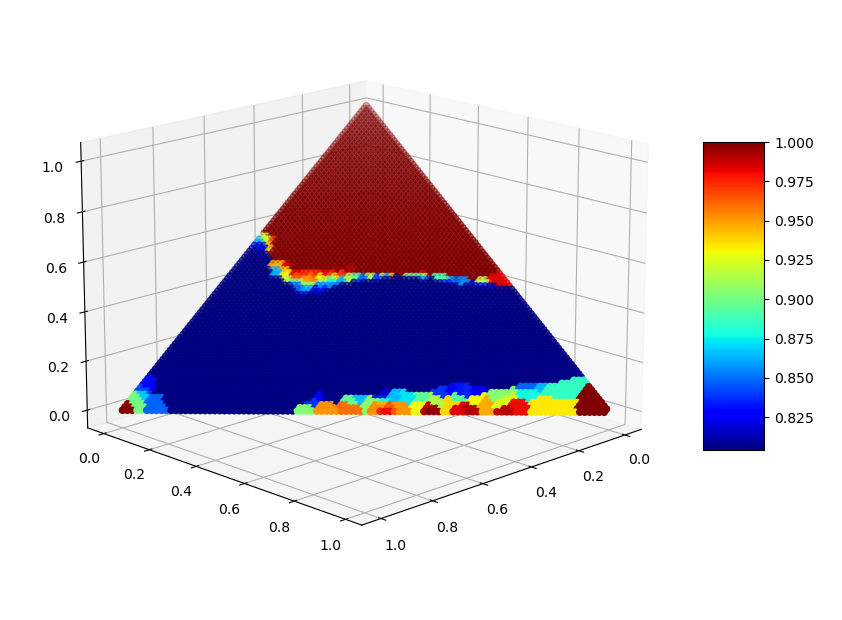}
    \caption{Plot of the solution of the optimization problem for the elements in $\{(0,0,x_3,x_4,x_5)\in\mathbb R^5_+|x_3+x_4+x_5=1\}$.}
    \label{fig:prior}
\end{figure}
Back to the prior-free setting, we have the following proposition.
 \begin{proposition}\label{prop:upperbound} Let $\hat{\varphi}$ be a symmetric and Sybil-proof operator, then $\hat{\varphi_i}(w_R)$ holds $\ref{eq:spoptimal}$. In particular, $\mathcal M^s$ consists of the symmetric and Sybil-proof operators, then \begin{equation*}
    \max_{\varphi\in \mathcal M^s}\min_{v\in TU^n_m} \frac{W(\varphi,v)}{v(N)} \leq \frac{n}{2^{n-1}}.
\end{equation*}
 \end{proposition}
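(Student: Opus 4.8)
The plan is to first show that every symmetric, Sybil-proof operator $\hat\varphi$ already obeys the bound \ref{eq:spoptimal} on the symmetric (unanimity) games $w_R$, and then to obtain the max-min inequality by evaluating the ratio $W(\varphi,v)/v(N)$ at the single test game $v=w_N$, where $|N|=n$.

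For the first part, fix $R$ and write $m:=|R|$. When $i\in R$, symmetry makes $r(m):=\sum_{j\in R}\hat\varphi_j(w_R)$ depend only on $m$, and every $j\in R$ is paid $r(m)/m$. The computational heart of the argument is to identify, for a Sybil set $K$ with $|K|=k$ created by $i\in R$, the split extension $v_s^K$ of $w_R$: straight from the definition, $v_s^K(S)=1$ forces both $K\cup\{i\}\subseteq S$ and $R\subseteq S$, i.e.\ $R\cup K\subseteq S$, and conversely, so $v_s^K=w_{R\cup K}$, a unanimity game on $m+k$ players. Applying (SP) with $k=1$ then gives
\begin{equation*}
\frac{r(m)}{m}=\hat\varphi_i(w_R)\geq\sum_{j\in K\cup\{i\}}\hat\varphi_j(w_{R\cup K})=2\cdot\frac{r(m+1)}{m+1},
\end{equation*}
i.e.\ $r(m+1)\le\tfrac{m+1}{2m}r(m)$. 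Non-negativity of payments together with no deficit gives the base case $r(1)=\hat\varphi_i(w_{\{i\}})\le w_{\{i\}}(N)=1$, and telescoping the recursion yields $r(m)\le r(1)\,m/2^{m-1}\le m/2^{m-1}$, hence $\hat\varphi_i(w_R)\le 1/2^{m-1}$. When $i\notin R$, player $i$ is null in $w_R$ and interchangeable with every other player outside $R$; the copy extension $v_c^K$ of $w_R$ by $i$ is again $w_R$ (it only adds further null players, all symmetric to $i$), so (SP) forces $\hat\varphi_i(w_R)\ge(k+1)\hat\varphi_i(w_R)$ for all $k\ge1$, whence $\hat\varphi_i(w_R)=0$. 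This proves \ref{eq:spoptimal}.

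For the ``in particular'' claim, fix an arbitrary $\varphi\in\mathcal M^s$ and take the monotone game $v=w_N\in TU^n_m$, for which $v(N)=1$. Since $\varphi$ is Sybil-proof, honest reporting is an equilibrium, so the realised welfare is $W(\varphi,w_N)=\sum_{i\in N}\varphi_i(w_N)$; and as all $n$ players are interchangeable in $w_N$, the bound just proved gives $\varphi_i(w_N)\le 1/2^{n-1}$ for each of them, hence $W(\varphi,w_N)\le n/2^{n-1}$. Therefore $\min_{v\in TU^n_m}W(\varphi,v)/v(N)\le W(\varphi,w_N)/w_N(N)\le n/2^{n-1}$, and since $\varphi$ was arbitrary, the stated bound follows.

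I do not expect a genuine obstacle: once the identity $v_s^K=w_{R\cup K}$ is in place the recursion is precisely the one already written out before the proposition, and the step most likely to need care is the base case $r(1)\le1$, which relies on the rebate operator being non-negative and running no deficit (so a lone player in $w_{\{i\}}$ cannot be paid more than the grand-coalition value $1$). The one mildly non-obvious idea is that $w_N$ is a worst case in two respects at once — it makes every agent a fully complementary participant (pushing each individual rebate down to the Sybil-proof ceiling $1/2^{n-1}$) and it fixes $v(N)=1$ — so a single test game certifies the max-min bound.
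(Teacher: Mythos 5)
Your argument for the case $i\in R$ is essentially the paper's own: the appendix proof of this proposition just points back to the derivation of the bound \ref{eq:spoptimal}, which is exactly your identification $v_s^K=w_{R\cup K}$ followed by the halving recursion $r(m+1)\le\frac{m+1}{2m}r(m)$ and telescoping; your explicit base case $r(1)\le 1$ (via non-negativity and no deficit) is the step the paper glosses as ``$r(1)=1$'', and deriving the ``in particular'' claim by evaluating at the single test game $w_N$ is also how the paper intends the max-min bound to follow. (Your remark that ``$r(m)$ depends only on $m$'' is, strictly, a within-game symmetry statement extended across ambient player sets, but this is harmless since the recursion only compares consecutive games along the explicit Sybil chain; the paper makes the same shortcut.)

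The one genuine gap is your treatment of $i\notin R$. From $v_c^K=w_R$ on the enlarged player set, Sybil-proofness gives $\hat\varphi_i(w_R\text{ on }[n])\ge\sum_{j\in K\cup\{i\}}\hat\varphi_j(w_R\text{ on }[n+k])$, and within-game symmetry only says that the $k+1$ clones receive a common value $a_{n+k}$ in the extended game; you then silently set $a_{n+k}=a_n:=\hat\varphi_i(w_R)$, i.e.\ you assume the payment of a null player in $w_R$ does not depend on the ambient number of players, which the symmetry axiom (a within-game property) does not deliver. So the inequality $\hat\varphi_i(w_R)\ge(k+1)\hat\varphi_i(w_R)$ does not follow; what you actually get is $a_n\ge(k+1)a_{n+k}$, which does not force $a_n\le 0$. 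The paper's own source for \ref{eq:spoptimal} disposes of this case by additionally invoking the null-player axiom (its in-text derivation is stated for ``symmetric and null-player'' operators), so under the proposition's literal hypotheses this sub-case is delicate in the paper as well. Fortunately nothing in your ``in particular'' argument uses it: the max-min bound only needs the $i\in R$ estimate at $v=w_N$, so the main conclusion of your proof stands.
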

If we impose that the operator is an additive, symmetric, null player, and sybil-proof optimal, then the operator must be the Banzhaf operator.
\begin{theorem}\label{theorem:banzhaf_value} There is a unique operator that is \textbf{(S)}, \textbf{(N)}, \textbf{(A)} and \textbf{(SPO)}. The operator is of the form
\begin{equation*}
    \beta_i(v)=\frac{1}{2^{\mid N\mid-1}}\sum_{S\subseteq N,i\in S}[v(S)-v(S\setminus\{i\})].
\end{equation*}
This operator is known as the \textit{Banzhaf value}. The Banzhaf value holds the marginality condition. 
\end{theorem}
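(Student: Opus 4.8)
The plan is to prove the statement in two parts — existence (the Banzhaf value $\beta$ satisfies \textbf{(S)}, \textbf{(N)}, \textbf{(A)}, \textbf{(SPO)}) and uniqueness (any operator satisfying these four equals $\beta$) — and then to read off marginality from the formula. For existence, additivity \textbf{(A)} is immediate because $v\mapsto\beta_i(v)$ is a linear functional of $v$; the null-player axiom \textbf{(N)} holds because if $i$ is null then $v(S)-v(S\setminus\{i\})=0$ for every $S\ni i$, so every summand vanishes; symmetry \textbf{(S)} follows from the bijection $S\mapsto (S\setminus\{i\})\cup\{j\}$ on $\{S\subseteq N: i\in S\}$, under which the marginal contributions of $i$ and $j$ are exchanged when $i,j$ are interchangeable. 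The only non-routine point is \textbf{(SPO)}, which I verify by computing $\beta_i(w_R)$ directly: for $i\notin R$ the player $i$ is null in $w_R$ (adding or removing $i$ never changes whether $R\subseteq S$), so $\beta_i(w_R)=0$; for $i\in R$ one has $w_R(S)-w_R(S\setminus\{i\})=1$ exactly when $R\subseteq S$ and $0$ otherwise (a negative difference is impossible since $S\setminus\{i\}\subseteq S$), and the number of $S$ with $i\in S\subseteq N$ and $R\subseteq S$ equals $2^{|N|-|R|}$. Hence $\beta_i(w_R)=2^{|N|-|R|}/2^{|N|-1}=1/2^{|R|-1}$, which is precisely the equality case of \eqref{eq:spoptimal}; note this value does not depend on the ambient $N$, so $\beta$ is a well-defined \textbf{(SPO)} operator on $\bigcup_{n\geq1}TU^n_m$.

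For uniqueness, let $\varphi$ satisfy \textbf{(S)}, \textbf{(N)}, \textbf{(A)}, \textbf{(SPO)}. By the same observations as above, $\varphi_i(w_R)=0$ for $i\notin R$ (by \textbf{(N)}, or just by non-negativity of $\varphi$ together with \eqref{eq:spoptimal}) and $\varphi_i(w_R)=1/2^{|R|-1}$ for $i\in R$ (by the definition of \textbf{(SPO)}, i.e.\ equality in \eqref{eq:spoptimal}); hence $\varphi$ and $\beta$ agree on every unanimity game $w_R$. Since $\{w_R\}_{\emptyset\neq R\subseteq N}$ is a basis of the space of cooperative games on $N$ (see the appendix) and both $\varphi$ and $\beta$ are additive, they agree on the whole space: writing $v=\sum_R c_R w_R$ with Harsanyi dividends $c_R$, one separates positive and negative dividends and applies additivity iteratively (together with the rational-scaling argument standard in the Shapley uniqueness proof, valid here because monotone non-negative games are closed under sums and positive scalar multiples) to reduce both $\varphi_i(v)$ and $\beta_i(v)$ to $\sum_{R\ni i} c_R\,2^{1-|R|}$. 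Therefore $\varphi=\beta$.

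Finally, marginality \textbf{(M)} of $\beta$ is read off the formula by rewriting $\beta_i(v)=\frac{1}{2^{|N|-1}}\sum_{S\subseteq N\setminus\{i\}}[v(S\cup\{i\})-v(S)]$, which shows $\beta_i(v)$ depends on $v$ only through the marginal contributions of $i$, exactly the content of \textbf{(M)}. The step I expect to be the main obstacle is the bookkeeping around \eqref{eq:spoptimal} — verifying that the count $2^{|N|-|R|}$ makes the Sybil-proof-optimal normalization $1/2^{|R|-1}$ consistent across games with different numbers of players, which is what lets a single operator on $\bigcup_{n\ge1}TU^n_m$ be \textbf{(SPO)} in every cardinality — and, on the uniqueness side, the routine but slightly delicate extension of additivity from the cone of monotone games to arbitrary linear combinations of the basis.
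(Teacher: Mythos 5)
Your uniqueness argument follows essentially the same route as the paper (pin the values $\varphi_i(w_R)=2^{1-|R|}$ on the unanimity basis, then extend by additivity via the basis lemma), and your direct verification that $\beta_i(w_R)=2^{|N|-|R|}/2^{|N|-1}=2^{1-|R|}$, together with \textbf{(S)}, \textbf{(N)}, \textbf{(A)}, is fine as far as it goes. The genuine gap is in the existence half: you read \textbf{(SPO)} as nothing more than ``equality in \eqref{eq:spoptimal} on the games $w_R$,'' but Sybil-proof-optimality in this paper means the operator is itself Sybil-proof (the bound \eqref{eq:spoptimal} is only meaningful for Sybil-proof operators) and attains the bound. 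Accordingly, the bulk of the paper's proof is the verification that the Banzhaf value satisfies \textbf{(SP)} for \emph{every} monotone game under both the copy and the split extensions, via the identities $\sum_{j\in K\cup\{i\}}\beta_j(v_c^K)=\frac{k+1}{2^{n+k-1}}\sum_{S\ni i}[v(S)-v(S\setminus\{i\})]\le\beta_i(v)$ and the analogous computation for $v_s^K$. Your proposal never establishes this, so as written it does not show that any operator with the four stated properties exists.

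Note also that this step cannot be dismissed as ``automatic from the unanimity games'': although $v\mapsto v_c^K$, $v\mapsto v_s^K$ and $\beta$ are all linear, so the Sybil-proofness inequality is linear in $v$, a monotone game generally has Harsanyi dividends $c_R$ of both signs, so validity of the inequality on the $w_R$ does not transfer to all of $TU^n_m$ by positive combinations. You need either the paper's direct computation from the explicit Banzhaf formula or some substitute argument. A smaller point: your appeal to a ``rational-scaling argument'' to get from additivity to determination by basis values on the cone of monotone games is at least as careful as the paper's own gloss, so no objection there; and your marginality remark is correct.
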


We delve deeper into the properties of the Banzhaf Value, which is introduced as an extension of the classical Shapley Value that accommodates the presence of Sybil or false-name manipulations. The key properties of the Banzhaf Value is that holds marginality, symmetry, however in general, is not true that the auctioneer can make the payments without losing money. More formally,
\begin{itemize}
    \item An operator has \textbf{No-Deficit (ND)} if for every set function $v:2^N\rightarrow \mathbb R_{\geq0}$ holds $\sum_{i=1}^n\varphi_i(v)\leq v(N)$.
\end{itemize}

These properties play a crucial role in understanding the behaviour and performance of the Banzhaf Value in MEV cooperative games scenarios where players may attempt to maximize their payoffs through the use of multiple false identities. By exploring these properties, we aim to establish the Banzhaf Value as a robust and fair mechanism for apportioning value among agents in the presence of potential Sybil attacks or false-name manipulations.
\begin{proposition}\label{prop:banzhaf_properties} The Banzhaf value holds the following properties:
\begin{enumerate}
    \item The sum of payoffs hold $W(\beta,v):=\sum_{i\in N}\beta_i(v) = \frac{1}{2^{n-1}}\sum_{S\subseteq N}(2|S|- n)v(S)$.
    \item Let $\textbf{X}$ be a random variable with uniform distribution in $\{S\subseteq N:\{i\}\subseteq S\}$, then $\beta_i(v) = \mathbb E[D_i(\textbf{X})]$. In particular, if $v$ is monotone, then $\beta_i(v)\geq0$ for all $i\in N$.
    \item If $v$ is additive, then $\beta_i(v)=\phi_i(v)=v(\{i\})$.
    \item If $N=\{1,2\}$, then $\beta=\phi$.
    \item The Banzhaf value is strong monotonic.
    \item \textbf{Deficit in general}: Is not true that $\sum_i \beta_i(v)\leq v(N)$ for all cooperative games $v$. For example, let $v:\{1,2,3\}\rightarrow \mathbb R$ defined as:
    \begin{center}
        \usetikzlibrary{fadings}

        \begin{tikzpicture}
        \fill[blue] (90:3) -- (210:3) -- (-30:3) -- cycle;
        \fill[red,path fading=west] (90:3) -- (210:3) -- (-30:3) -- cycle;
        \fill[red,path fading=south] (90:3) -- (210:3) -- (-30:3) -- cycle;
        
        \node at (90:3) [above] {1};
        \node at (210:3) [below left] {1};
        \node at (-30:3) [below right] {1};
        
        \node at (145:1.5) [above left] {4};
        \node at (35:1.5) [above right] {2};
        \node at (-90:1.5) [below] {4};
        
        \node at (0,0) {5};
        \end{tikzpicture}
    \end{center}
    \begin{align*}
        v(\{1\})=v(\{2\})=v(\{3\})=1 ,\quad v(\{1,3\})=2,\quad v(\{2,3\})=v(\{1,2\})=4,\quad v(\{1,2,3\}) = 5
    \end{align*}
    In this case, we have that $W_\beta(v)=5.5>5=v(\{1,2,3\})$.
    \end{enumerate}
    
\end{proposition}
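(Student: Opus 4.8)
The plan is to verify each of the six items in Proposition~\ref{prop:banzhaf_properties} essentially by direct computation from the explicit formula $\beta_i(v)=\frac{1}{2^{n-1}}\sum_{S\subseteq N,\,i\in S}[v(S)-v(S\setminus\{i\})]$, where $n=|N|$. Items (3), (4), and (6) require no structural work at all: for (3), additivity of $v$ means $v(S)-v(S\setminus\{i\})=v(\{i\})$ for every $S\ni i$, and since there are $2^{n-1}$ such $S$ the prefactor cancels exactly, giving $\beta_i(v)=v(\{i\})=\phi_i(v)$; for (4) one simply writes out both $\beta$ and $\phi$ on $N=\{1,2\}$ and observes the coefficients $\tfrac12$ coincide; for (6) one plugs the tabulated values of $v$ into the formula for $W_\beta$ from item (1) and checks $W_\beta(v)=5.5$.

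For item (1), I would start from $W(\beta,v)=\sum_{i\in N}\beta_i(v)=\frac{1}{2^{n-1}}\sum_{i\in N}\sum_{S\ni i}[v(S)-v(S\setminus\{i\})]$ and reorganize the double sum by the coalition $T$ at which $v$ is evaluated. A positive term $+v(T)$ arises once for each $i\in T$ (taking $S=T$), contributing $|T|\,v(T)$; a negative term $-v(T)$ arises once for each $i\notin T$ (taking $S=T\cup\{i\}$, so $S\setminus\{i\}=T$), contributing $-(n-|T|)\,v(T)$. Summing gives coefficient $|T|-(n-|T|)=2|T|-n$, which is exactly the claimed identity after renaming $T$ to $S$. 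For item (2), rewrite $\beta_i(v)=\frac{1}{2^{n-1}}\sum_{S\ni i}D_i(S)$ where $D_i(S)=v(S)-v(S\setminus\{i\})$; since $|\{S\subseteq N:\{i\}\subseteq S\}|=2^{n-1}$, this is precisely the uniform average $\mathbb{E}[D_i(\mathbf{X})]$ over that family, and monotonicity of $v$ makes each $D_i(S)\geq0$, hence $\beta_i(v)\geq0$.

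Item (5), strong monotonicity, is the one genuinely ``structural'' claim, though it is still short: if $v,w$ satisfy $v(S\cup\{i\})-v(S)\geq w(S\cup\{i\})-w(S)$ for all $S\subseteq N\setminus\{i\}$, then for every $S\ni i$ we have $D_i^v(S)=v(S)-v(S\setminus\{i\})\geq w(S)-w(S\setminus\{i\})=D_i^w(S)$ (applying the hypothesis with the set $S\setminus\{i\}$), so the termwise inequality $\beta_i(v)\geq\beta_i(w)$ follows from averaging. One could alternatively invoke Theorem~\ref{theorem:banzhaf_value} together with the already-cited fact that (SM) follows once one checks it on the basis $\{w_R\}$, but the direct argument is cleaner here.

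I do not expect a serious obstacle anywhere; the only thing to be careful about is the bookkeeping in item (1) --- making sure each coalition $T$ is counted exactly once with the right multiplicity as $S$ ranges over supersets of singletons --- and, in item (6), correctly reading the values of $v$ off the figure and off the text (they agree) before substituting into the $2|S|-n$ formula with $n=3$. So the write-up is mostly a matter of presenting these six verifications in sequence, deriving (1) first since (6) depends on it, then (2), then the three immediate items (3), (4), (6), and finally (5).
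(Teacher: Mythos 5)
Your verifications are all correct, and where your argument overlaps with the paper's proof (item 3, the additive case) it is literally the same computation. The difference is one of completeness rather than of mathematical route: the paper's proof of Proposition~\ref{prop:banzhaf_properties} consists of four short bullets that cite \cite{dragan1996new} for the sum-of-payoffs identity, declare the expectation reformulation ``trivial'', compute the additive case, and cite \cite{nowak1997axiomatization} for the two-player coincidence with the Shapley value, leaving strong monotonicity and the numerical deficit example essentially unargued. You instead supply self-contained elementary proofs of everything: the double-counting argument giving coefficient $2|T|-n$ for each coalition $T$ (which is exactly what \cite{dragan1996new} is being cited for), the termwise inequality $D_i^v(S)\geq D_i^w(S)$ for strong monotonicity, and the explicit check $W_\beta(v)=\tfrac{1}{4}(-3+10+15)=5.5>5$ for the deficit example. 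What your version buys is that the proposition becomes verifiable without chasing the references and makes clear that every item follows in a few lines from the defining formula $\beta_i(v)=\tfrac{1}{2^{n-1}}\sum_{S\ni i}[v(S)-v(S\setminus\{i\})]$; what the paper's version buys is brevity and pointers to where these facts appear in the cooperative-game-theory literature. No gaps in your argument.
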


The last property have unfortunate implications. In other words, the last statement says that if we want an operator that is symmetric, additive, marginality and Sybil-proofness then in some circumstances the auctioneer must subsidize the players that form the coalition.

\begin{corollary}[Imposibility] No non-trivial operator holds the symmetry, no-deficit condition, marginality, and 2-EF. 
\end{corollary}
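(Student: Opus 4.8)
The plan is to combine the characterization results already established in the excerpt. First I would recall that Young's theorem (cited right after the definition of strong monotonicity) tells us that within the class of symmetric, efficient operators, marginality pins down the Shapley value — but here we do not assume efficiency, so instead I would work directly with marginality plus symmetry to get enough structure on the symmetric games $w_R$. The key move is: marginality (M) forces $\varphi_i(w_R)$ to depend only on the "marginal contribution pattern" of $i$ in $w_R$; together with symmetry (S) this forces $\varphi_i(w_R) = c_{|R|}$ for some constants $c_k$ when $i \in R$, and $\varphi_i(w_R) = c'_{|R|}$ (another constant) when $i \notin R$. Actually, since a null player in $w_R$ is any $i \notin R$, and (M) implies null-player, we get $\varphi_i(w_R) = 0$ for $i \notin R$, so only the constants $c_k := \varphi_i(w_R)$ for $i \in R$, $|R| = k$, remain.

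Next I would invoke 2-Efficiency. The condition $\varphi_i(v_c^K) + \varphi_{n+1}(v_c^K) = \varphi_i(v)$ for $K = \{n+1\}$ applied to $v = w_R$ gives a recursion relating $c_{|R|}$ to $c_{|R|+1}$. Concretely, the copy extension $w_R$ with one copy of a player $i \in R$ should behave like $w_{R \cup \{n+1\}}$ on the relevant coalitions (a coalition needs both $i$ and its copy? — here one must be careful which Sybil extension $v_c^K$ produces; per the definition, $v_c^K(S) = v((S\setminus K)\cup\{i\})$ when $S \cap K \neq \emptyset$, so adding the copy does not require the copy to cooperate, it substitutes). I would compute $\varphi$ on this extended game using (M) and (S) again, obtaining an identity of the form $2 c_{k} = c_{k-1}$ or similar — forcing $c_k = c_1 / 2^{k-1}$ (this is exactly the (SPO)-type geometric decay, consistent with equation \ref{eq:spoptimal}). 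Combined with additivity-free reasoning, the operator is then determined on the $w_R$ with $c_1$ as the only free parameter — so it is essentially forced to be a scalar multiple of the Banzhaf value on symmetric games.

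Finally I would derive the contradiction from No-Deficit. Using Proposition \ref{prop:banzhaf_properties}(6), the explicit three-player game $v$ with $W_\beta(v) = 5.5 > 5 = v(N)$ witnesses a deficit for the Banzhaf value; since our operator agrees with $c_1 \cdot \beta$ on the base $\{w_R\}$ and (M)+(S) extend this determination to all monotone games (the $w_R$ form a basis, and marginality lets one reconstruct $\varphi$ on sums — or at least on this particular $v$ by writing out its decomposition), the operator inherits $W(\varphi, v) = c_1 \cdot 5.5$. No-Deficit demands $c_1 \cdot 5.5 \le 5$ while the normalization coming from the $w_{\{i\}}$ singleton games (monotonicity, plus the fact that a non-trivial operator cannot be identically zero on singletons without collapsing) forces $c_1 = 1$ (or at least $c_1 \ge$ something bounded away from a value small enough to satisfy No-Deficit). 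This contradiction rules out the existence of such an operator.

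The main obstacle I anticipate is the step where marginality plus symmetry must be leveraged \emph{without} additivity to pin down the operator on \emph{all} monotone games rather than just on the basis $\{w_R\}$ — marginality is a pairwise condition and does not immediately give additive structure, so I would either (a) restrict attention to the single explicit counterexample game $v$ from Proposition \ref{prop:banzhaf_properties}(6) and carefully verify that (M), (S), and 2-EF alone force $W(\varphi,v) > v(N)$ for it, sidestepping a full characterization, or (b) prove a lemma that (M) + (S) on monotone games already implies strong monotonicity-type behavior sufficient to reuse Young's or the Banzhaf characterization machinery. Option (a) is the safer route and is likely what the authors intend, since they have already done the hard computational work of exhibiting the deficit example.
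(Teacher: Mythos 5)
The paper's own proof is one line: it invokes Nowak's characterization \cite{nowak1997axiomatization}, namely that the Banzhaf value is the unique operator satisfying \textbf{(S)}, \textbf{(2-EF)} and \textbf{(M)}, and then uses Proposition \ref{prop:banzhaf_properties}(6) (the explicit three-player game with $W(\beta,v)=5.5>5=v(N)$) to conclude that \textbf{(ND)} must fail. Your proposal tries to rebuild that characterization from scratch, and the places where you improvise are exactly where it breaks. First, the recursion you want from 2-EF on the symmetric games does not exist: the copy extension $(w_R)_c^{\{n+1\}}$ of $w_R$ at $i\in R$ is not of the form $w_{R'}$ (the copy \emph{substitutes} for $i$, so a coalition wins iff it contains $R\setminus\{i\}$ plus at least one of $i,n+1$), and 2-EF together with symmetry only yields $2\varphi_i\bigl((w_R)_c^K\bigr)=\varphi_i(w_R)$, i.e.\ it evaluates $\varphi$ on the new, non-symmetric game; it gives no relation between $c_k$ and $c_{k\pm1}$. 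The geometric decay $1/2^{|R|-1}$ of equation \ref{eq:spoptimal} is derived in the paper from Sybil-proofness under the \emph{split} extension, which is not among the hypotheses of this corollary. Second, as you yourself flag, marginality is not additivity, so even a correct determination of $\varphi$ on the basis $\{w_R\}$ would not determine $\varphi$ on the deficit example; your option (a) is never carried out, and your option (b) is, in effect, Nowak's theorem --- which is precisely the citation the paper relies on and which you could simply invoke.

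There is a further gap at the normalization step. Without a null/dummy-type axiom (not assumed in the corollary, but used in Nowak's uniqueness), nothing forces $c_1=1$: every positive scalar multiple $c\beta$ satisfies \textbf{(S)}, \textbf{(M)} and \textbf{(2-EF)}, since all three are preserved under scaling, and the fixed three-player example only excludes $c>10/11$, so it cannot by itself rule out all non-trivial operators. To close the argument you must either add the dummy axiom and quote Nowak's uniqueness verbatim, or show that $\sum_i\beta_i(v)/v(N)$ is unbounded over monotone games as $n$ grows (e.g.\ simple majority games, whose total Banzhaf power grows on the order of $\sqrt{n}$), so that no positive multiple of $\beta$ can satisfy \textbf{(ND)} on the whole domain $\bigcup_{n\geq1}TU^n_m$. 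As written, your proposal does not establish the corollary.
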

The previous result is deduced in \cite{nowak1997axiomatization}, where is proven that the Banzhaf value is the unique operator that holds $\textbf{(S), (2-EF)}$ and \textbf{(M)} and the fact that $\beta$ does not hold the no-deficit condition. 

\subsection{Max-min strong monotonic operator}
In proposition \ref{prop:upperbound} we have seen that if an operator is $\textbf{(S)}$ and  $\textbf{(SP)}$ then the worst-case ration is upper bounded by $\frac{n}{2^{n-1}}$. Now, we will see that this bound is approximately tight for operators that hold \textbf{(S)}, \textbf{(M)}, \textbf{(SPO)}, and \textbf{(ND)}. To do so, we will consider the \textbf{approx-Max-min optimal operator} as 
\begin{equation*}
 \psi_i(v) = \frac{1}{2^{N-1}}\sup\left\{\sum_{i\in K}\phi_k(\bar{v})\mid \bar{v}\in \text{SE}\left(v\right)\right\}
\end{equation*}
 where $\phi$ is the Shapley value and $\text{SE}(v)=\bigcup_{l\geq N}\{\bar{v}:2^{[l]}\rightarrow\mathbb R_+|\, \bar{v}\text{ general Sybil-extension of } v\}$.
 
\textbf{Observation:} For symmetric games $w_R$,  $\psi_i(v) =\frac{v(N)}{2^{n-1}}$. This holds since 
\begin{equation*}
   v(N)\geq 2^{N-1}\psi_i(v) \geq\underset{k\geq0}{\sup}\sum_{i\in K}\phi_{k}(v^K_c)=\underset{k\geq0}{\sup}\frac{k}{n+k}v(N)=v(N).
\end{equation*}
\begin{theorem}\label{theorem:main} $\psi$ holds \textbf{(S)}, \textbf{(M)}, \textbf{(SP)} and \textbf{(N-D)}. Moreover, is worst-case optimal i.e.
\begin{equation*}
    \max_{\varphi\in \mathcal M}\min_{v\in TU^n_m} \frac{W(\varphi,v)}{v(N)} \geq \min_{v\in TU^n_m} \frac{W(\psi,v)}{v(N)} = \frac{1}{2^{n-1}}
\end{equation*}
where $\mathcal M$ denotes the set of all operators that hold \textbf{(S)}, \textbf{(M)}, \textbf{(SPO)} and \textbf{(ND)}, $TU^n_m$ denotes the set of monotone cooperative games with $n$ elements and $W(\varphi,v) = \sum_{i\in N}\varphi_i(v)$. In particular, by proposition \ref{prop:upperbound}, $n\min_{v\in TU^n_m} \frac{W(\psi,v)}{v(N)} \geq \max_{\varphi\in \mathcal M}\min_{v\in TU^n_m}\frac{W(\varphi,v)}{v(N)}$.
\end{theorem}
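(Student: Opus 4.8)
The plan is to verify the four axioms for $\psi$ one at a time and then pin down the worst-case ratio, exploiting the explicit description of $\psi$ as a normalized supremum of Shapley payoffs over general Sybil-extensions. \textbf{Symmetry (S):} if $i$ and $j$ are interchangeable in $v$, then any general Sybil-extension of $v$ obtained by blowing up $i$ is mirrored by the analogous extension blowing up $j$, and the Shapley value is symmetric; hence the two suprema defining $\psi_i(v)$ and $\psi_j(v)$ range over families of games that are isomorphic via the transposition $(i\,j)$, so they are equal. \textbf{No-deficit (ND):} this is essentially the Observation already recorded in the excerpt, applied coalition-by-coalition. For any $\bar v \in \mathrm{SE}(v)$ on player set $[l]$ with $l \ge n$, efficiency of the Shapley value gives $\sum_{k \in [l]} \phi_k(\bar v) = \bar v([l])$, and since $\bar v$ is a Sybil-extension of $v$ we have $\bar v([l]) = v(N)$ (aggregating all Sybils of player $i$ recovers $i$, for every $i$). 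Partition $[l]$ into the Sybil-classes $K_1, \dots, K_n$ of the original players; then $\sum_{i} \left(\sum_{k \in K_i} \phi_k(\bar v)\right) = v(N)$, so $\sum_i \sup_{\bar v} \sum_{k \in K_i}\phi_k(\bar v) $ — wait, the sup is taken separately per $i$, so one must argue the suprema are \emph{simultaneously} achievable, or more cheaply: each per-$i$ supremum is at most $v(N)$ when $v$ is monotone (a single Sybil class can never be credited more than the grand-coalition value, since $\phi_k(\bar v) \ge 0$ by monotonicity and their total is $v(N)$), giving $\psi_i(v) \le v(N)/2^{n-1}$ and hence $\sum_i \psi_i(v) \le n\,v(N)/2^{n-1} \le v(N)$ for $n \ge 1$. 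Actually for a clean proof of ND one wants the sum bound directly: I would show that for \emph{any} fixed $\bar v$, $\sum_i \sum_{k\in K_i}\phi_k(\bar v) = v(N)$, and that the sup over a product structure equals the sup of the sum only up to the monotone-nonnegativity slack; the safe route is the per-coordinate bound $\psi_i(v)\le v(N)/2^{n-1}$ which already yields ND since $n\le 2^{n-1}$ for all $n\ge1$.

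\textbf{Marginality (M):} suppose $v, w$ satisfy $v(S\cup\{i\}) - v(S) = w(S \cup \{i\}) - w(S)$ for all $S \subseteq N\setminus\{i\}$. The key observation is that the correspondence $\bar v \mapsto \bar w$ between general Sybil-extensions that blow up player $i$ into a class $K_i$ (keeping all other players fixed) preserves, for every Sybil $k \in K_i$ and every coalition $T$ in the extended game, the marginal $\bar v(T \cup \{k\}) - \bar v(T) = \bar w(T \cup \{k\}) - \bar w(T)$ — this is because a Sybil-extension only reshuffles the \emph{interaction pattern of $i$'s identity}, so the marginal contribution of each Sybil of $i$ is inherited from a marginal contribution of $i$ in the base game. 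Since the Shapley value has marginality, $\phi_k(\bar v) = \phi_k(\bar w)$ for each $k \in K_i$, hence $\sum_{k\in K_i}\phi_k(\bar v) = \sum_{k\in K_i}\phi_k(\bar w)$, and taking suprema over the (matched) families gives $\psi_i(v) = \psi_i(w)$. This is the step I expect to be the main obstacle: one must carefully define what "general Sybil-extension" means in a way that (a) matches the $\mathrm{SE}(v)$ used in the statement, and (b) makes the marginal-preservation claim literally true, since extensions that entangle $i$'s Sybils with \emph{other} players' Sybils could a priori break the clean correspondence — I would restrict attention to extensions that are "localized at $i$" and argue (using $\mathrm{SP}$/superadditivity of $\phi$ on non-$i$-blowups, or a direct monotonicity argument) that the supremum is attained among such localized extensions.

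\textbf{Sybil-proofness (SP):} this is where the $1/2^{n-1}$ normalization does the work and where $\psi$ being a sup over \emph{all} Sybil-extensions is essential. For a Sybil-extension $v^K$ of $v$ (copy or split) with $|K| = k$, every further Sybil-extension of $v^K$ that blows up the class of a player $j$ is itself a Sybil-extension of $v$ blowing up $j$'s class; thus the family of extensions available "after Sybiling" is a sub-family of $\mathrm{SE}(v)$. Combined with the normalization dropping from $1/2^{n-1}$ to $1/2^{n+k-1}$ after adding $k$ Sybils, one gets $\sum_{j \in K\cup\{i\}}\psi_j(v^K) \le \frac{1}{2^{n+k-1}} \cdot (\text{sup of Shapley mass on } i\text{'s class over } \mathrm{SE}(v)) \le \frac{2^{n-1}}{2^{n+k-1}}\psi_i(v)\cdot 2^{n-1}/2^{n-1}$ — i.e. the extra $2^{-k}$ factor kills any gain, exactly as in the $r(1+n) \le \frac{r(n)}{2}\cdot\frac{n+1}{n}$ recursion already in the excerpt; I would make this precise by showing the relevant supremum for $v^K$ is bounded by that for $v$, so SP reduces to $2^{-(n+k-1)} \le 2^{-(n-1)}$. \textbf{Worst-case ratio:} the upper bound $\le 1/2^{n-1}$ is immediate from $\psi_i(v) \le v(N)/2^{n-1}$ summed over... no: $W(\psi,v)/v(N)$ could be as large as $n/2^{n-1}$ a priori, so the claim $= 1/2^{n-1}$ is a statement about the \emph{minimum}. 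For the lower bound $W(\psi,v)/v(N) \ge 1/2^{n-1}$ for every monotone $v$: since $v$ is monotone and $v(N) > 0$, the trivial (empty) Sybil-extension already gives $\sum_{k \in \{i\}}\phi_i(v) = \phi_i(v)$, so $\psi_i(v) \ge \phi_i(v)/2^{n-1}$, whence $W(\psi,v) \ge v(N)/2^{n-1}$ by efficiency of $\phi$; for tightness, the symmetric game $w_N$ realizes equality by the Observation ($\psi_i(w_N) = 1/2^{n-1}$, summing to $n/2^{n-1}$) — hmm, that gives ratio $n/2^{n-1}$, not $1/2^{n-1}$, so the extremal instance must instead be a game where almost all players are null, e.g. $v$ additive on one player: then $\psi$ puts mass $v(N)$ on that coordinate divided by $2^{n-1}$ and $0$ elsewhere, giving ratio exactly $1/2^{n-1}$; I would exhibit this $v$ explicitly and check $\psi_i(v) = v(N)/2^{n-1}$ for the sole non-null $i$ (using property (3) of the Banzhaf proposition as a sanity check that Shapley mass cannot be inflated by Sybiling a null structure beyond the $2^{n-1}$ normalization). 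Finally the sandwich with Proposition \ref{prop:upperbound} gives the displayed chain of inequalities since $\psi \in \mathcal M$.
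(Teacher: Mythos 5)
Your proposal follows essentially the same architecture as the paper's proof: symmetry is inherited from the symmetry of the Shapley value; no-deficit comes from the per-coordinate bound $\psi_i(v)\le v(N)/2^{n-1}$ (any Sybil class's Shapley mass is at most the extended grand-coalition value, which equals $v(N)$, by efficiency and nonnegativity of $\phi$ on monotone games) together with $n\le 2^{n-1}$; marginality is transported through matched Sybil-extensions using marginality of $\phi$; Sybil-proofness uses that extensions of an extension are extensions of the original game plus the normalization drop $\tfrac{k+1}{2^{n+k-1}}\le\tfrac{1}{2^{n-1}}$; and the lower bound $W(\psi,v)\ge v(N)/2^{n-1}$ is exactly the paper's trivial-extension argument via efficiency of $\phi$. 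The one place you genuinely diverge is the equality $\min_{v}W(\psi,v)/v(N)=1/2^{n-1}$: the paper handles the matching upper bound by invoking Proposition 2.3 of \cite{mazorra2023cost} and Sybil-proofness, whereas you exhibit an explicit extremal instance, the additive game concentrated on a single player, which is more self-contained and, after your correct self-correction away from $w_N$ (that game only gives ratio $n/2^{n-1}$), does achieve the bound provided the Sybil classes of null players carry no Shapley mass in any admissible extension. That proviso, and the analogous issue you flag in the marginality step (general extensions entangling Sybils of several players are not canonically matched between $v$ and $w$), are real looseness points tied to the paper's underspecified notion of general Sybil-extension; but the paper's own proof is no more careful there, writing the correspondence $v\mapsto v^K$ only for copy/split extensions, so your proposal sits at the same level of rigor while supplying a more explicit tightness witness.
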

In other words, the $\psi$ operator is an approximate worst-case optimal operator in the set of operators that hold \textbf{(S)}, \textbf{(M)}, \textbf{(SPO)} and \textbf{(ND)}. However, as we will see, this operator punishes every cooperative game $v$ to holds the "fairness" properties. Moreover, this operator punishes players that generate MEV that is separable from other players MEV.
\begin{itemize}
    \item An operator $\varphi$ is \textbf{$\alpha$-Separable ($\alpha$-SE)} if for every player $i=1,...,n$ holds $\varphi_i(v)\geq\alpha\min_{S\subseteq N,i\not\in S}\{v(S)-v(S\setminus \{i\})\}$. If we denote the left side of the inequality as the operator $\theta_i(v) =\min_{S\subseteq N\setminus\{i\}}\{v(S)-v(S\setminus \{i\})\}$, is equivalent to say $\varphi_i(v)\geq \alpha\theta_i(v)$ for all $v$ and $i=1,...,n$. As shown in \cite{nowak1997axiomatization}, this operator holds \textbf{(SM)}, \textbf{(S)}, \textbf{(ND)} and \textbf{(SP)}.
\end{itemize}
\begin{proposition}\label{prop:bound} For every monotone cooperative game, $\sum_{i=1}^n \psi_i(v)\leq \frac{nv(N)}{2^{n-1}}$. The bound is tight for symmetric games. Moreover, the operator $\psi$ is not $1-$separable. However, the operator $\bar{\psi}=\frac{1}{1+\frac{n}{2^{n-1}}}\max\{\psi,\theta\}$ is in $\mathcal M$ and is $\left(\frac{1}{1+\frac{n}{2^{n-1}}}\right)-$Separable.
\end{proposition}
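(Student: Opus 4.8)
\textbf{Proof plan for Proposition~\ref{prop:bound}.}
The statement bundles four claims, and I would tackle them in increasing order of difficulty. First, the upper bound $\sum_{i=1}^n\psi_i(v)\le nv(N)/2^{n-1}$ is immediate: for each $i$, every general Sybil-extension $\bar v\in\mathrm{SE}(v)$ satisfies $\sum_{k\in K}\phi_k(\bar v)\le \bar v(\bar N)=v(N)$ by efficiency of the Shapley value and the fact that a Sybil-extension does not change the value of the grand coalition, hence $\psi_i(v)\le v(N)/2^{n-1}$; summing over the $n$ players gives the claim. Tightness for symmetric games $w_R$ is exactly the Observation stated just before the proposition: there $\psi_i(w_R)=v(N)/2^{n-1}$ for each of the $n=|R|$ players, so the sum equals $nv(N)/2^{n-1}$.

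Second, to show $\psi$ is not $1$-separable I would exhibit a single game where $\psi_i(v)<\theta_i(v)=\min_{S\subseteq N\setminus\{i\}}\{v(S)-v(S\setminus\{i\})\}$. The natural candidate is an additive (or near-additive) game with $n$ large: if $v(S)=\sum_{j\in S}a_j$ with all $a_j>0$ then $\theta_i(v)=a_i$, whereas any Sybil-extension of an additive game is again additive and the best a player can do is collect his own share, so $\psi_i(v)=a_i/2^{n-1}<a_i$ once $n\ge2$. I would double-check, using Proposition~\ref{prop:banzhaf_properties}(3) as a sanity reference, that no Sybil-extension of an additive game lets $i$'s copies jointly earn more than $a_i$, which follows because in an additive game the Shapley value of each identity is just its stand-alone value and splitting/copying an additive contribution keeps the total contribution of $i$'s family equal to $a_i$.

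Third, and this is the main work, I must verify that $\bar\psi=\frac{1}{1+n/2^{n-1}}\max\{\psi,\theta\}$ lies in $\mathcal M$, i.e.\ satisfies \textbf{(S)}, \textbf{(M)}, \textbf{(SPO)}, and \textbf{(ND)}, and is $\bigl(1/(1+n/2^{n-1})\bigr)$-separable. Symmetry is inherited because both $\psi$ and $\theta$ are built from symmetric ingredients (the Shapley value over all Sybil-extensions, and a min over marginal contributions), and the pointwise max of symmetric operators is symmetric; marginality follows because $\psi$ is marginal (it depends on $v$ only through marginal-contribution data, as it is assembled from Shapley values of extensions, each of which is marginal) and $\theta$ is manifestly marginal, and a max of two marginal operators is marginal. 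The $\alpha$-separability with $\alpha=1/(1+n/2^{n-1})$ is the reason for the normalizing constant: $\bar\psi_i(v)\ge \alpha\,\theta_i(v)$ holds trivially since $\bar\psi_i(v)=\alpha\max\{\psi_i(v),\theta_i(v)\}\ge\alpha\theta_i(v)$.

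The delicate points are \textbf{(SPO)} and \textbf{(ND)} for $\bar\psi$. For \textbf{(ND)}, I would bound $\sum_i\bar\psi_i(v)=\alpha\sum_i\max\{\psi_i(v),\theta_i(v)\}\le \alpha\bigl(\sum_i\psi_i(v)+\sum_i\theta_i(v)\bigr)$; using the first part $\sum_i\psi_i(v)\le nv(N)/2^{n-1}$ and the elementary bound $\sum_i\theta_i(v)\le v(N)$ (each $\theta_i(v)\le v(N)-v(N\setminus\{i\})$ and these marginals telescope-bound to $v(N)$ by monotonicity, or more simply $\theta_i(v)\le v(\{i\})$ and superadditivity fails so I would instead argue directly that $\sum_i\theta_i(v)\le v(N)$ via $\theta_i(v)\le v(N)-v(N\setminus\{i\})$ summed against a spanning-tree/chain decomposition of $N$), we get $\sum_i\bar\psi_i(v)\le\alpha(1+n/2^{n-1})v(N)=v(N)$, which is exactly why the constant $1+n/2^{n-1}$ appears. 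For \textbf{(SPO)}, the value on a symmetric game $w_R$ must equal $1/2^{|R|-1}$; here $\psi_i(w_R)=1/2^{|R|-1}$ and $\theta_i(w_R)=\min\{w_R(S)-w_R(S\setminus\{i\})\}=0$ (take $S$ not containing all of $R$), so $\max\{\psi_i,\theta_i\}=1/2^{|R|-1}$, but then multiplying by $\alpha<1$ would break \textbf{(SPO)} — so I expect the correct reading is that \textbf{(SPO)} for $\bar\psi$ must be re-examined, and either the claim is that $\bar\psi$ satisfies a rescaled optimality or that on symmetric games $\theta$ actually dominates in a way that cancels the constant; resolving this normalization consistency is the crux of the argument and where I would spend most care, checking the cases $n=1,2$ by hand against Proposition~\ref{prop:banzhaf_properties}(3)–(4) and then the general symmetric case. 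Once the symmetric-game normalization is pinned down, \textbf{(SP)} (the full ex-post Sybil-proofness, which $\mathcal M$ needs via \textbf{(SPO)} plus the structure of $\psi$ and $\theta$ both being \textbf{(SP)} as recorded after the $\alpha$-SE definition) transfers to the max and survives scaling by $\alpha\le1$, completing the proof.
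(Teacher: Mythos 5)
Your route coincides with the paper's own proof for everything the paper actually argues: the upper bound comes from $\sup\{\sum_{k\in K}\phi_k(\bar v)\mid \bar v\in\mathrm{SE}(v)\}\le v(N)$ (hence $\psi_i(v)\le v(N)/2^{n-1}$, summed over $i$), tightness from the Observation on symmetric games, no-deficit from $\max\{\psi_i,\theta_i\}\le \psi_i+\theta_i$ together with $\sum_i\psi_i(v)\le nv(N)/2^{n-1}$ and $\sum_i\theta_i(v)\le v(N)$, and $\alpha$-separability with $\alpha=\frac{1}{1+n/2^{n-1}}$ by construction. Two of your steps need tightening. For non-$1$-separability (which the paper in fact never proves), your intermediate claim that every Sybil extension of an additive game is additive is unnecessary and dubious for general extensions, whose values on coalitions containing only part of the Sybil family are essentially unconstrained; the clean argument is just the already-established bound: for the all-ones additive game with $n\ge3$, $\psi_i(v)\le v(N)/2^{n-1}=n/2^{n-1}<1=\theta_i(v)$. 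And for $\sum_i\theta_i(v)\le v(N)$, the only workable justification among those you sketch is the single-chain telescoping one (fix one ordering of $N$ and bound each chain marginal below by $\theta_i$); summing the top marginals $v(N)-v(N\setminus\{i\})$ does not give the bound (on $w_N$ that sum equals $n$).

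Your hesitation about \textbf{(SPO)} is well founded, and it points at a looseness in the paper's statement rather than a flaw specific to your plan: since $\theta_i(w_R)=0$ for $|R|\ge2$, one gets $\bar\psi_i(w_R)=\alpha\,\psi_i(w_R)<2^{-(|R|-1)}$, so the rescaled operator cannot satisfy the equality version of \textbf{(SPO)}; the paper's own proof only verifies symmetry, strong monotonicity, Sybil-proofness and no-deficit, so membership in $\mathcal M$ must be read with \textbf{(SP)} (approximate optimality) in place of exact \textbf{(SPO)}. With that reading you can also close the Sybil-proofness step you left as an assertion: ``max of two Sybil-proof operators'' is not a general principle, but here it works because on any copy or split extension every identity in the Sybil family has $\theta_j(\tilde v)=0$ (a coalition containing two copies, respectively a single split part, witnesses a zero marginal), hence $\sum_{j\in K\cup\{i\}}\max\{\psi_j(\tilde v),\theta_j(\tilde v)\}=\sum_{j\in K\cup\{i\}}\psi_j(\tilde v)\le\psi_i(v)\le\max\{\psi_i(v),\theta_i(v)\}$, and the inequality survives the scaling by $\alpha$. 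With these repairs your proposal matches the paper's proof and fills in the parts it leaves implicit.
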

This arises the question if there are other operators in $\mathcal M$ that, in general increase the welfare of the players and are separable. In other words, are there any other operators that are Pareto-optimal? 
\begin{itemize}
    \item For a given set of operators $\mathcal M$, an operator $\varphi$ is \textbf{Pareto-Optimal} if for every other operator $\varphi'$ holds $W(\varphi,v)\geq W(\varphi',v)$.
\end{itemize}
We conjecture that the operator $\psi$ is not Pareto-optimal in $\mathcal M$. Confirming or disproving this conjecture and finding better operators is a research line we aim to tackle in our future work.

\section{MEV basic Auction model}\label{section:OrderFlow}
We consider a simple model. We assume that players will have private valuations on their transactions/bundles and will be binary (the transaction has some value if it does not revert and zero otherwise). In other words, we will assume that the utility function are quasi-linear and players have private valuations. This model is clearly not sufficiently general and does not take into account the non-commutativity of transactions executions and that players valuations are clearly non-binary. However, the model presented is a sub-case of the general case, and so, the limits of the mechanisms on this model will be shared by the mechanisms for the general model.

More formally, we assume that there are an unknown number of players $n$ with quasi-linear utility $u_i$ and that have a  valuation $v_i$ for accessing to the block space. More precisely, we model the utility of the players as $u_i(x_i,v_i,p_i) = x_i v_i-p_i$ where $x_i\in\{0,1\}$ (has or does not have access to change the state) and $p_i\in\mathbb R$ is the payment (can be positive due to the potential of rebates). For a given set of transactions, we consider $\mathcal F\subseteq 2^N$ the set of feasible allocations (due to conflicting bundles). This set holds:
\begin{itemize}
    \item $\emptyset \in\mathcal F$.
    \item If $I\in \mathcal F$ then any subset $I'\subseteq I$ holds $I'\in\mathcal F$.
\end{itemize}
If the builder receives a set of bundles $2^N$, we say that a vector $\textbf{x}$ of allocation is feasible if $\{i:x_i=1\}\in\mathcal F$.



A mechanism $(\x,\p)$ is \textit{individually rational for the buyers}  if for every buyer $i$ with bundle $B_i$ and valuation $v_i$  reports $(B_i,v_i)$ to the mechanism, then $u_i(x_i,b_i,p_i)\geq0$. That is, if the player truly reports his valuation, then they are not worse than reporting nothing to the mechanism. Similarly, a mechanism $(\x,\p)$ is \textit{individually rational for the builder}  if the mechanism has no deficit. That is, a mechanism $(\x,\p)$ is \textit{or no-deficit} if the total payment made by the buyers is greater than or equal to the sum of the individual payments received by the builder. More formally, for every report $(B_1,b_1),...,(B_n,b_n)$ with outcome $a=\textbf{x}(b)$ holds $\sum_{i=1}^n p_i(a)\leq p_b(a)$.

The welfare of an allocation is then $W(\textbf{x}(b),\textbf{p}(b))= \sum_{i=1}^n u_i(x_i,v_i,p_i)$. The revenue of an allocation is $R(\textbf{x}(b),\textbf{p}(b)) = v(S)+\sum_{i=1}^n p_i(b)$. A mechanism is $\alpha-$welfare approximated in $TU_m=\{v:2^N\rightarrow \mathbb R|v\text{ monotone}\}$ if for every $v\in TU_m$, we have that 
\begin{equation*}
    W(\textbf{x}(b),\textbf{p}(b))\geq\alpha \max_{S\in\mathcal F}\left\{\sum_{i\in S}b_i+v(S)\right\}.
\end{equation*}
A mechanism $(\x,\p)$ is \textit{monotone} if for every agent $i$ and vector of reports $\textbf{b}$, we have that if $\x_i(\textbf{b})=1$, then $\x_i(\bar{b_i},\textbf{b}_{-i})=1$ for every $\bar{b_i}\geq \textbf{b}_i$. 
This simply means that if an agent $i$ is selected in the outcome by declaring a bid $\textbf{b}_i$, then by declaring a higher bid he should still be selected. Myerson’s lemma below implies that monotone algorithms admit truthful payment schemes.
\begin{theorem}[Myerson's lemma, see \cite{myerson1981optimal}]\label{theorem:myerson} In single parameter domains a normalized mechanism $\mathcal M = (\x, \p)$ is truthful if and only if:
\begin{itemize}
    \item   $\x$ is \textit{monotone}: For all $i=1,...,n$, if $b_i'\geq b_i$ and $\x(b_i,b_{-i})=1$ implies $\x(b'_i,b_{-i})=1$.
    \item  \textit{Winners are paid threshold payments}: payment to each winning bidder is $\text{inf}\{b_i| \x(b_i,b_{-i})=1\text{ and }b_i\geq0\}$.
\end{itemize}
\end{theorem}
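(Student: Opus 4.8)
The plan is to reduce the $n$-agent statement to a single-agent analysis: fix an agent $i$ and the reports $b_{-i}$ of the others, and study the induced maps $z \mapsto \x_i(z,b_{-i}) \in \{0,1\}$ and $z \mapsto \p_i(z,b_{-i}) \in \mathbb{R}$. Truthfulness of $\mathcal M$ is, by definition, the statement that for every such $i$ and $b_{-i}$ and every true value $v_i$, reporting $v_i$ maximises $v_i \x_i(z,b_{-i}) - \p_i(z,b_{-i})$ over $z \ge 0$. Because $\x_i$ is binary and the type space is one-dimensional, both directions of the equivalence are elementary.

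For the ``only if'' direction I would first derive monotonicity: writing the incentive constraint that an agent of value $b_i$ does not gain by reporting $b_i'$, together with the symmetric one with the roles of $b_i, b_i'$ swapped, and adding them, yields $(b_i' - b_i)\big(\x_i(b_i',b_{-i}) - \x_i(b_i,b_{-i})\big) \ge 0$; hence $b_i' \ge b_i$ and $\x_i(b_i,b_{-i}) = 1$ force $\x_i(b_i',b_{-i}) = 1$. Monotonicity together with the binary range means there is a threshold $t = t(b_{-i}) = \inf\{z \ge 0 : \x_i(z,b_{-i}) = 1\}$ with $\x_i(\cdot,b_{-i}) = 0$ strictly below $t$ and $= 1$ strictly above it. Next I would pin down payments: comparing any two winning reports (resp. any two losing reports) through the incentive constraints shows that all winners pay a common amount $P$ and, using normalisation, all losers pay $0$. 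Finally, an agent of value $b_i > t$ who wins must not prefer deviating to a losing report, giving $b_i - P \ge 0$, so $P \le t$; and an agent of value $b_i < t$ who loses must not prefer deviating to a winning report, giving $0 \ge b_i - P$, so $P \ge t$. Hence $P = t$, which is precisely the asserted threshold payment.

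For the ``if'' direction, assume $\x$ is monotone and each winner pays $t(b_{-i})$ while each loser pays $0$, and check that truthful reporting is dominant. Fix $i$, $b_{-i}$, and true value $v_i$. If $v_i > t$, truthful reporting wins and gives utility $v_i - t \ge 0$; any other report that still wins gives the same utility, and any report that loses gives $0 \le v_i - t$. If $v_i < t$, truthful reporting loses and gives utility $0$; any other losing report gives $0$, and any winning report gives $v_i - t < 0$. If $v_i = t$, the utility is $0$ either way. So no deviation is profitable, and truthfulness follows.

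The only genuine subtlety, and the step I would be most careful about, is the behaviour exactly at $b_i = t$, where whether the agent is counted as a winner can depend on a tie-breaking convention. This is measure-zero and does not affect the dominance argument (the agent is indifferent in the case that matters), but it is the point at which a loose phrasing of ``winners are paid threshold payments'' can become literally false, so in a full write-up I would fix the convention first --- e.g. take $\x_i(\cdot,b_{-i})$ right-continuous so that the infimum defining $t$ is attained --- and only then state the payment formula.
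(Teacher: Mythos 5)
The paper does not prove this statement at all: it is quoted as a known result and attributed to Myerson \cite{myerson1981optimal}, so there is no in-paper argument to compare yours against. Your proposal is, in substance, the standard proof of Myerson's lemma specialised to binary single-parameter allocations, and it is correct: the two-way incentive constraints added together give monotonicity, monotonicity plus the $\{0,1\}$ range gives a threshold $t(b_{-i})$, pairwise comparisons of winning (resp.\ losing) reports give a common winner payment $P$ and, via normalisation, zero loser payments, and squeezing $P$ between the losing and winning deviation constraints yields $P=t$; the converse verification is the usual case analysis on $v_i$ versus $t$. Two small points to make explicit in a full write-up: (i) the step ``losers pay $0$'' is where the hypothesis that $\mathcal M$ is \emph{normalized} is actually used, and it should be invoked by name rather than folded into the winner/loser comparison; (ii) besides the tie at $b_i=t$ that you already flag, there is the degenerate case in which no losing (or no winning) report exists for the given $b_{-i}$ --- e.g.\ $t=0$ with every nonnegative bid winning --- where the incentive constraints alone do not pin down $P$ and again only normalisation does. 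Neither issue affects the substance; with those conventions fixed, your argument is a complete and correct proof of the cited lemma.
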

The Myerson lemma has a harsh implication for MEV-rebate mechanisms. In some cases, every truthful mechanism can not rebate the users by the MEV that their transactions generate. The idea is the following, assume that there are two agents $i=1,2$ with valuations $v_1$ and $v_2=v_1-\varepsilon$. Also assume that both transactions conflict and both transactions generate an MEV of $v(\{1\})=v(\{2\})$ independently. Then,  by the Myerson lemma, the payment of the first player will be $v_2-\varepsilon$ independently of the value $v(\{1\})$. In other words, when there is competition for changing the same states, the rebate decreases.

Now, using the Myerson lemma, we can easily define a mechanism that is incentive compatible, and truthful that maximizes the welfare (without taking into account the surplus generated by the MEV).
\begin{mybox2}{\texttt{Myerson mechanism}}
    \begin{enumerate}
    \item Accept the pairs $(B_i,b_i)$ of bundles and bid for each player $i=1,\ldots,n$.
    \item Compute $\mathcal S_{\max}:=\underset{S\in\mathcal F}{\text{argmax}}\{\sum_{i\in S}b_i\}$.
    \item Choose randomly $S^\star\in \mathcal F_{\max}$ as an allocation.
    \item We define the payments as $p_i = \text{inf}\{b_i| \x(b_i,b_{-i})=1\text{ and }b_i\geq0\}$.
\end{enumerate}
\end{mybox2}
And the MEV maximizing mechanism:
\begin{mybox2}{\texttt{MEV maximizing mechanism}}
    \begin{enumerate}
    \item Accept the pairs $(B_i,b_i)$ of bundles and bid for each player $i=1,\ldots,n$.
    \item Compute $\mathcal S_{\max}:=\underset{S\in\mathcal F}{\text{argmax}}\{\sum_{i\in S}b_i+v(S)\}$.
    \item Choose randomly $S^\star\in \mathcal F_{\max}$ as an allocation.
    \item We define the payments as $p_i = \text{inf}\{b_i| \x(b_i,b_{-i})=1\text{ and }b_i\geq0\}$.
\end{enumerate}
\end{mybox2}

\begin{proposition}\label{prop:comparation} Both mechanisms are incentive compatible and truthful. The MEV-maximizing mechanism maximizes the total welfare (users and builder). Both mechanisms are not Sybil-proof. In other words, there is no welfare-maximizing, individually rational, and Sybil-proof Mechanism in the MEV basic auction model.
\end{proposition}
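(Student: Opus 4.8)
The plan is to prove Proposition~\ref{prop:comparation} in three parts, corresponding to its three claims: truthfulness/incentive compatibility, welfare optimality of the MEV-maximizing mechanism, and failure of Sybil-proofness for both.

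For the first part, I would invoke Myerson's lemma (Theorem~\ref{theorem:myerson}) directly. Both mechanisms are single-parameter domains: each player $i$ has a one-dimensional private type $v_i$, and the allocation $x_i\in\{0,1\}$. So it suffices to check that both allocation rules are monotone, and that winners are charged threshold payments — the latter is true by construction (step 4 in each box). For monotonicity: in the Myerson mechanism, $\mathcal S_{\max}$ maximizes $\sum_{i\in S}b_i$ over feasible $S\in\mathcal F$; if $i$ is in some maximizer with bid $b_i$ and $i$ raises to $b_i'\ge b_i$, then every set containing $i$ gains weakly in objective and every set not containing $i$ is unchanged, so the old maximizer containing $i$ remains (weakly) optimal, hence $i$ stays in some element of $\mathcal F_{\max}$. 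The same argument works verbatim for the MEV-maximizing mechanism since $v(S)$ does not depend on the bids. There is a minor subtlety because ties are broken by uniform random choice over $\mathcal F_{\max}$, so monotonicity should be read in the interim (expected-allocation) sense; Myerson's lemma still applies to the expected allocation probability, which is weakly monotone by the same argument, and the threshold-payment characterization is exactly the payment rule stated. So both mechanisms are dominant-strategy incentive compatible.

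For the second part, I would argue that the MEV-maximizing mechanism maximizes $\sum_i u_i(x_i,v_i,p_i) + (\text{builder surplus})$. Once truthfulness gives $b_i=v_i$, the realized outcome is a feasible $S^\star$ maximizing $\sum_{i\in S}v_i+v(S)$ over $S\in\mathcal F$. The total welfare including the builder's extracted MEV is precisely this quantity (the internal transfers $p_i$ and $p_b$ cancel in the social welfare sum, modulo the no-deficit bookkeeping), so no feasible allocation can do better; this is exactly the definition of $\alpha$-welfare-approximation with $\alpha=1$ stated earlier. I would spell out the one-line identity $\sum_i u_i + p_b = \sum_{i\in S^\star} v_i + v(S^\star)$ using quasi-linearity.

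The third part — failure of Sybil-proofness — is the crux, and I expect it to be the main obstacle only in the sense of picking a clean witness instance. Following the Myerson-lemma remark in the text immediately preceding the mechanism boxes, I would take the simplest example: one honest player with bundle $B$ and valuation $v>0$, and essentially no competition, so the threshold payment is $0$ (or some small amount) and the player is allocated. Now let this player Sybil into two identities reporting bundles $B_1, B_2$ that must both be executed to realize the same outcome, i.e. the split extension: feasibility couples them, and the value $v$ now accrues only to $\{1,2\}$ jointly. By choosing competing phantom structure appropriately (or simply exhibiting that the combined threshold payments of the two Sybils can be driven below the single threshold payment, hence the combined utility strictly increases), I show the deviation is profitable. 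Concretely I would use a two-slot example where a single honest bid of value $v$ pays threshold $t$, but splitting into two bids each of effective value $v$ across two non-conflicting slots lowers each threshold, so total rebate to the controlling agent strictly rises. This establishes that neither mechanism is Sybil-proof, and since the MEV-maximizing mechanism was the unique welfare-maximizer among IR truthful mechanisms (by Myerson, any welfare-maximizing monotone rule has the threshold payments, and we have just shown those payments admit a profitable Sybil), there is no welfare-maximizing, individually rational, Sybil-proof mechanism in this model. The one place to be careful is ensuring the Sybil example respects the feasibility-set axioms ($\emptyset\in\mathcal F$, downward closure) and the no-deficit/IR-for-builder constraint; I would check these hold for the chosen $\mathcal F$ before concluding.
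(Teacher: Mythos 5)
Your overall route is the same as the paper's: Myerson's lemma (monotonicity plus threshold payments) gives dominant-strategy truthfulness of both mechanisms, the welfare identity $\sum_i u_i + p_b=\sum_{i\in S^\star}v_i+v(S^\star)$ gives optimality of the MEV-maximizing rule, and Sybil-proofness fails via a splitting deviation that exploits threshold payments. The parts you treat carefully (interim monotonicity under random tie-breaking, transfers cancelling in the welfare sum) are fine and, if anything, more explicit than the paper, whose appendix proof only records the Myerson step.

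The one place you need to be careful is the witness for non-Sybil-proofness. Your first instantiation — a single player with ``essentially no competition, so the threshold payment is $0$'' — cannot work: if the threshold is already $0$ the player pays nothing, and no split can strictly improve her utility, so this shows at most a weak violation. The strict gain requires a \emph{conflicting competitor} that makes the single identity's threshold positive, after which the complementary split pieces shield one another: with the other piece's bid already above the competitor, each piece's threshold collapses to $0$. You gesture at exactly this (``combined threshold payments of the two Sybils can be driven below the single threshold payment''), but your ``two-slot, non-conflicting'' example is left under-specified, and as stated it is unclear what forces the positive threshold $t$ in the unsplit case. The paper's concrete instance is the clean way to finish: two conflicting transactions with bids $b_1=1$ and $b_2=1+\varepsilon$, each generating MEV $10$; player $2$ wins and pays the threshold $1$, but after splitting $t_2$ into two complementary transactions each bid at $1$, both are allocated with threshold payments $0$, so her utility rises from $\varepsilon$ to $1+\varepsilon$. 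With that (or an equally explicit) example pinned down, your argument — including the final step that welfare maximization plus truthfulness forces exactly these threshold payments, hence the impossibility claim — matches the paper's.
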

Note that the user welfare of the Myerson mechanism is not necessarily greater or equal to the user welfare of the MEV maximizing mechanism. For example, if we consider two conflicting transactions $t_1,t_2$ that generate $v>2(b_2 -b_1)$ and $0$ MEV respectively and bids $b_1<b_2$. Then, the user welfare of the Myerson mechanism is $b_2-b_1$ and the user welfare of the MEV maximizing mechanism is $b_1+v-b_2>b_2-b_1$.

The incentive compatibility and the truthfulness are deduced by the Myerson lemma \ref{theorem:myerson}. The mechanisms however are not Sybil-proof. Consider the following example. Two players want to trade in the same direction with the same amount of tokens, and this generates the same back-running cross-domain MEV opportunity. This leads to two conflicting transactions $t_1$ and $t_2$. Players' valuations and bids on the trade are $b_1=1\$$ and $b_2=(1+\varepsilon)\$$. The MEV generated by $t_1$ is $10\$$ and the second is $10\$$ as well. Then, the second transaction is allocated with payment $p_2=1$, having a utility $u_2=\varepsilon$. If one player breaks the second transaction by using conflicting opcodes into two complementary transactions $t^1_i$ and $t^2_i$ with bids $b^1_{i}=b^2_i=1$, then by Myerson-lemma we have that the transactions $t^1_i$ and $t^2_i$ are allocated in the block but with payments $p^1_i=p^2_i=0$. And so, the mechanisms are not Sybil-proof.

Also, even without taking into account sybils, it is generally inaccurate to state that truthful, non-deficit, and individually rational mechanisms exist in block space allocation that can deliver $\alpha$-optimal welfare for any strictly positive $\alpha$. For example, when a player can have negative valuations (costs) on the allocation. We intend to illustrate this through an ensuing counterexample. The fundamental explanation for this lies in the interaction of bundles: when two bundles complement each other, any truthful mechanism will be obligated to compensate both participants according to their marginal contribution, presenting an inherent challenge to the mechanisms with no deficit.

\begin{proposition}[Negative result] If players have negative valuations, then there is no truthful and no deficit mechanism that is better than $0-$approximate welfare in the worst-case scenario.
\end{proposition}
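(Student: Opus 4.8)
The plan is to exhibit a family of instances, parametrized by a large positive quantity $M$, on which any truthful, no-deficit mechanism achieves welfare that is an arbitrarily small fraction of the optimal welfare. The construction uses two complementary bundles together with one player who has a negative valuation, so that the only way to extract the large positive value requires including the costly player, but truthfulness forces the mechanism to pay that player at least (essentially) their threshold, which combined with the marginal-contribution payments to the complementary player violates no-deficit unless the mechanism simply refuses to realize the good allocation.

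Concretely, I would take three agents. Agents $1$ and $2$ submit bundles $B_1,B_2$ that are individually worthless but, executed together with a state-changing action controlled by agent $3$, unlock value: set $v(\{1,2,3\}) = M$ while every strict subset has value $0$ (or bounded by a constant). Agent $3$ has a true valuation $v_3 = -c$ for $c>0$ fixed (a genuine cost of being included, e.g. gas or risk), while agents $1,2$ have valuation $0$. The optimal welfare is $M - c$, attained only by the allocation $\{1,2,3\}$. Now run any truthful, individually-rational, no-deficit mechanism $(\x,\p)$ on this profile. By Myerson's lemma (Theorem~\ref{theorem:myerson}) applied to each single-parameter agent, if the mechanism ever allocates $\{1,2,3\}$ then the payment charged to each agent is the threshold bid, and in particular agent $3$, who must be paid to cover a negative valuation, receives $p_3 \le -c'$ for some threshold determined by the others' reports and the complementarity structure; simultaneously agents $1$ and $2$, being pivotal for the value $M$, must be rebated on the order of their marginal contributions. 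Summing the payments and invoking the builder's no-deficit constraint $\sum_i p_i(a) \le p_b(a) \le v(S) = $ (the extractable value given the allocation), I would derive a contradiction whenever $M$ is large relative to the fixed constants — the mechanism is forced to pay out more than it can collect. Hence no truthful, no-deficit mechanism can select $\{1,2,3\}$ on this profile, so it selects some subset of value $0$, giving welfare $\le 0$ (or a fixed constant independent of $M$), while the optimum is $M - c \to \infty$; thus the approximation ratio is $0$ in the worst case.

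The delicate step — and the main obstacle — is pinning down the payment/rebate lower bounds precisely enough to make the no-deficit contradiction airtight, because Myerson's lemma as stated in the excerpt is phrased for winners being charged threshold payments, whereas here the interesting agent has a negative value and is effectively being \emph{paid}; I would need to set up the single-parameter embedding carefully (treating each agent's ``bid'' as a report of $v_i$, allowing negative reports for agent $3$) and argue that truthfulness plus monotonicity force the threshold structure symmetrically, so that to realize the $\{1,2,3\}$ allocation the mechanism's net outflow is at least $M$ minus a constant while its inflow is at most what the builder can extract, which is itself bounded by the realized value under the allocation. A clean way to finish is to note that individual rationality for the buyers forces $u_i \ge 0$, so agent $3$ must receive $p_3 \ge c$, and combined with the rebates to the pivotal complementary agents the total buyer payment is negative of order $\Omega(M)$, exceeding any builder payment $p_b \le v(\{1,2,3\})$ that is consistent with no-deficit once the proper accounting of who pays whom is fixed — so the only escape is the trivial allocation, which yields welfare bounded by a constant and hence $0$-approximate as $M\to\infty$. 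I would also remark that this phenomenon is exactly the bundle-complementarity obstruction flagged before the proposition statement, and that it already appears with a single negative-valuation agent, so no Sybil behaviour is needed.
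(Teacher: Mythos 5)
There is a genuine gap, and you have in fact flagged it yourself: the step that forces the mechanism to rebate the two pivotal zero-valuation agents an amount of order $M$ is never established, and without it your instance yields no contradiction. Individual rationality for agent $3$ only forces an outflow of the constant $c$, and with agents $1,2$ held at fixed valuation $0$ nothing in Myerson's lemma (Theorem~\ref{theorem:myerson}) prevents a truthful mechanism from allocating $\{1,2,3\}$, charging agents $1,2$ zero, paying agent $3$ its threshold, and collecting the extracted value $M$ from the allocation --- this is no-deficit (at worst break-even), so the claimed ``net outflow $\Omega(M)$'' does not follow from what you actually argue. The missing idea is that the \emph{approximation requirement itself} must be applied across the single-parameter profiles of the pivotal agents, including negative reports: if the mechanism is strictly better than $0$-approximate, it must allocate the complementary pair whenever the pivotal agent's reported cost is below the coalition value, which pins that agent's threshold at the full value and hence forces a rebate equal to the full value to \emph{each} pivotal agent. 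That is where negative valuations really enter the statement --- they are the reports of the pivotal agents, not an extra third player --- so your closing remark that the obstruction ``already appears with a single negative-valuation agent'' attributes the deficit to the wrong agent.

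The paper's proof does exactly this with a two-agent instance and no large parameter: $v(\{1,2\})=1$, $v(\{1\})=v(\{2\})=v(\emptyset)=0$, costs $c_1=c_2=0$. Better-than-$0$ approximation forces allocation of $\{1,2\}$ whenever $c_i<1$ (holding the other cost at $0$), so by Myerson's lemma for costs each winner must be paid its threshold $1$; the total payout $2$ exceeds the collectible value $v(\{1,2\})=1$, violating no-deficit, hence the mechanism must leave someone out and obtain welfare $0$. Your three-agent, large-$M$ construction can be repaired along the same lines (vary the reports of agents $1$ and $2$ over negative values to force thresholds $-(M-c)$ on each), but as written the decisive quantitative step is absent, so the proof is incomplete.
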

\begin{proof} Consider the following valuation $v(\{1,2\})=1$ and $v(\{1\})=v(\{2\})=v(\emptyset)=0$ with costs $c_1=0$ and $c_2=0$.  If a mechanism is strictly better than $0-$approximate welfare, then the mechanism will output $x_1=1,x_2=1$. By  Myerson's lemma (for costs), the first player in this scenario must pay $p_1=1$ and $p_2=1$. However, $v(\{1,2\})=1$ and so has a deficit since the builder must pay $2$. So the allocation is $x_i=0$ for some $i=1,2$ leading to $0$ social welfare.
\end{proof}

In case $\mathcal F=2^N$, by the previous section, for every operator $\tau\in\mathcal M$ the following mechanism is truthful, symmetric, worst-case approximate welfare maximizing, Sybil-proof,  has no-deficit and is strong monotonic.
\begin{mybox2}{\texttt{$\tau$-mechanism}}
    \begin{enumerate}
    \item Accept the pairs $(B_i,b_i)$ of bundles and bid for each player $i=1,\ldots,n$.
    \item Take $S_{\max}=N$.
    \item Take the payments $p_i =-\tau_{i}(v)$.
\end{enumerate}
\end{mybox2}

The fundamental problem to solve in the general case is when there are  players that complement each other to share the costs (in a Sybil-proof manner) to outbid the conflicting bundles. In future work, we will explore Sybil cost-sharing mechanisms following the work of \cite{dobzinski2008shapley,gkatzelis2016optimal}.
\section{Discussion \& Open questions}
In this paper, we discussed rebates for LP of CFMM that are strongly monotonic, symmetric, non-deficit, and strategy-proof against creating false tokens. Moreover, we provided an approximate max-min operator that maintains these properties. Additionally, we formalized the maximum welfare operator with priors for players who have full complementarity. However, the task of finding Pareto-optimal operators with these properties in both the prior-free and prior-optimal settings is left for future work. Moreover, finding a computationally an efficient algorithm to estimate the value of the operator, that retains the incentive properties also remains for future work. 

Regarding MEV-sharing, we discussed a simple model with private valuations in the allocation and independent of the order of execution. Generally, though, players have interdependent valuations and exhibit preferences in the order of execution. In future work, we aim to analyze Sybil-proof mechanisms in these settings. Specifically, we postponed the study of cost-sharing Sybil-proof mechanisms that uphold the no-deficit condition.

\appendix
\begin{proof}\ref{prop:CFMM} Observe  that if $\Delta$ is a solution of the optimization problem with reserves $R$, then $\lambda\Delta$ is a solution for the problem with $\lambda R$ reserves. And so the $\text{Arb}(\lambda R,p) = \lambda \text{Arb}(R,p)$. Since $\text{Arb}$ is one dimensional, we have that $\text{Arb}(s_1+s_2,p)=\textbf{Arb}(s_1,p)+\text{Arb}(s_2,p)$, and so, $v$ is additive.
\end{proof}

\begin{proof}\ref{trilemma} Lets assume that $\psi$ is symmetric, Collusion-proof and General Sybil-Proof. Consider a symmetric game $w_N$ the Since is Collusion-proof, if $S=N$, we deduce that $\sum_{i=1}^n\varphi_i(w_N)\geq \varphi_1(v)$, where $v$ is the game with a unique player and $v(\{1\})=w_N(N)=R$. On the other hand, since the operator is Sybil-proof under copies we have that $\sum_{i=1}^n\varphi_i(w_N)\leq R\frac{n}{2^{N-1}}$. Using both bounds, we have that $\varphi_1(v)\leq R\frac{n}{2^{n-1}}$. If $\varphi_i(v)\not=0$, this leads to a contradiction, since the right side of the equation tends to $0$ when $n\rightarrow +\infty$. 
Therefore, $\varphi_1(v)=0$ for every $v:2^{[1]}\rightarrow \mathbb R_{\geq0}$. Since $\varphi$ is GSP we have that the for any $\hat{v}:2^N\rightarrow\mathbb R$ monotone such that $\hat{v}(N)=v(\{v_1\})$ holds $\sum_{i=1}^N \varphi_i(v)\leq \varphi_1(v)=0$. Therefore, $\varphi_i(v)=0$.
\end{proof}

\begin{lemma}\label{lemma:base} The set $\{w_R:R\subseteq N\}$ is a base of the real vector space $\{v:2^N\rightarrow \mathbb R\}$. Moreover, for any game $v$, it holds
\begin{equation*}
    v = \sum_{R\not=\emptyset,R\subseteq N} c_Rw_R.
\end{equation*}
The coefficients are independent of $N$, and are given by
\begin{equation*}
    c_R = \sum_{T\subseteq R}(-1)^{\mid R\mid -\mid T\mid}v(T).
\end{equation*}
\end{lemma}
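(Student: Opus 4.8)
The plan is to prove the explicit expansion formula directly; the basis property and uniqueness of coefficients then follow almost for free. Throughout I take $v(\emptyset)=0$, the standard convention for characteristic functions (note $w_R(\emptyset)=0$ for every nonempty $R$, so the span of the $w_R$'s already lies in $\{v:v(\emptyset)=0\}$, a subspace of dimension $2^{|N|}-1$, which matches the number of nonempty subsets $R\subseteq N$).

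First I would verify that, setting $c_R := \sum_{T\subseteq R}(-1)^{|R|-|T|}v(T)$, one has $\sum_{\emptyset\neq R\subseteq N} c_R w_R = v$. Since $w_R(S)=1$ exactly when $R\subseteq S$, evaluating the left-hand side at an arbitrary $S\subseteq N$ reduces the claim to $\sum_{\emptyset\neq R\subseteq S} c_R = v(S)$. Because $c_\emptyset = v(\emptyset)=0$, I may add the $R=\emptyset$ term back harmlessly, substitute the definition of $c_R$, and interchange the order of summation to get
\[
\sum_{R\subseteq S} c_R = \sum_{T\subseteq S} v(T)\sum_{T\subseteq R\subseteq S}(-1)^{|R|-|T|}.
\]
Writing $U=R\setminus T$, the inner sum equals $\sum_{U\subseteq S\setminus T}(-1)^{|U|}$, which by the binomial identity $\sum_{j=0}^{m}\binom{m}{j}(-1)^j=[m=0]$ is $1$ when $S=T$ and $0$ otherwise. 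Hence only $T=S$ survives and the sum equals $v(S)$, as required. This simultaneously shows that $\{w_R:\emptyset\neq R\subseteq N\}$ spans the space and exhibits the coefficients.

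Next I would establish linear independence, which upgrades "spanning set" to "basis" and forces uniqueness of the representation. Suppose $\sum_{\emptyset\neq R\subseteq N} a_R w_R = 0$. Evaluating at a singleton $S=\{i\}$, only $R=\{i\}$ contributes, so $a_{\{i\}}=0$; proceeding by induction on $|S|$, if $a_R=0$ whenever $|R|<|S|$, then evaluating at $S$ leaves only $a_S w_S(S)=a_S$, so $a_S=0$. Thus all coefficients vanish. (Alternatively, $2^{|N|}-1$ spanning vectors in a space of dimension $2^{|N|}-1$ are automatically independent.) Uniqueness then identifies the coefficients in the expansion with the $c_R$ computed above.

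Finally, the assertion that the $c_R$ are independent of the ambient player set is immediate from the closed form: $c_R=\sum_{T\subseteq R}(-1)^{|R|-|T|}v(T)$ only references the values of $v$ on subsets of $R$, so enlarging $N$ (and extending $v$) leaves $c_R$ unchanged. There is no genuine obstacle here; the only points needing care are the bookkeeping of the nonempty-$R$ restriction (dispatched by $c_\emptyset=0$) and the sign-cancellation identity $\sum_{U\subseteq A}(-1)^{|U|}=[A=\emptyset]$, which is exactly Möbius inversion on the Boolean lattice.
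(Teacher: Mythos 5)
Your proof is correct, and it follows the standard route (evaluate the proposed expansion at each $S$, kill the inner alternating sum via $\sum_{U\subseteq A}(-1)^{|U|}=[A=\emptyset]$, then get independence by induction on $|S|$); the paper itself states Lemma~\ref{lemma:base} without proof, treating it as the classical unanimity-game/M\"obius-inversion fact, so there is no competing argument to compare against. You also handle a small imprecision in the statement that the paper glosses over: since $w_R(\emptyset)=0$ for every nonempty $R$, the $2^{|N|}-1$ games $\{w_R:\emptyset\neq R\subseteq N\}$ can only be a basis of the subspace $\{v:v(\emptyset)=0\}$, not of all of $\{v:2^N\rightarrow\mathbb R\}$, and your normalization $v(\emptyset)=0$ (together with $c_\emptyset=0$) is exactly what makes the count and the expansion come out right.
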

In particular, if an operator $\phi$ is additive, then is completely determinate by its evaluation in the symmetric games.
\begin{proof}\ref{theorem:banzhaf_value} First, since the operator is symmetric and sybil-proof efficient, we have that it holds \ref{eq:spoptimal} with equality. Then, we deduce the uniqueness by additivity. Let's compute it. We know by lemma \ref{lemma:base} that for a cooperative game $v$, it holds
\begin{equation*}
 v = \sum_{R\not=\emptyset,R\subseteq N} \left ( \sum_{S\subseteq R}(-1)^{\mid R\mid -\mid S\mid}v(S)\right)w_R.
\end{equation*}
and so, we have that:
\begin{align*}
    \phi_i(v) &= \sum_{R\not=\emptyset,R\subseteq N} \left ( \sum_{S\subseteq R}(-1)^{\mid R\mid -\mid T\mid}v(S)\right)\phi_i(w_R)\\
    &=\sum_{R\subseteq N,i\in R} \left ( \sum_{S\subseteq R}(-1)^{\mid R\mid -\mid S\mid}v(S)\right)\frac{1}{2^{\mid R\mid-1}}\\
    &=\sum_{S\subseteq N}\sum_{R\subseteq N,S\cup\{i\}\subseteq R} (-1)^{\mid R\mid -\mid S\mid}\frac{1}{2^{\mid R\mid -1}}v(S)\\
\end{align*}
From the previous equality, let us write
\begin{equation*}
    \gamma_i(S) = \sum_{R\subseteq N,S\cup\{i\}\subseteq R} (-1)^{\mid R\mid -\mid S\mid}\frac{1}{2^{\mid R\mid -1}}
\end{equation*}
It is easy to see that if $i\not\in S$, then $\gamma_i(R)=-\gamma(R\setminus\{i\})$. All the terms in the right-hand side of the previous equation will be the same in both cases, except that there is a change of sign throughout. This means, we will have that
\begin{equation*}
    \phi_i(v)= \sum_{R\subseteq N, i\in R}\gamma_i(R)[v(R)-v(R\setminus\{i\})]
\end{equation*}
Let $s= |S|$. Now if $i\in R$, we see that there are exactly ${n-r \choose r-s}$ coalitions $R$ with $r$ elements such that $S\subseteq R$. Thus, we have that,
\begin{align*}
    \gamma_i(S) &= \sum_{r=s}^n (-1)^{r-s}{n-s \choose r-s}\frac{1}{2^{r-1}}\\
    &=\sum_{l=0}^{n-s}(-1)^l{n-s \choose l}\frac{1}{2^{l+s-1}}\\
    &=\frac{1}{2^{s-1}}\sum_{l=0}^{n-s}(-1)^l{n-s \choose l}\frac{1}{2^{l}}\\
    &=\frac{1}{2^{s-1}}(1-1/2)^{n-s} = \frac{1}{2^{n-1}}.
\end{align*}
and thus deducing that $\phi^s_i(v)=\frac{1}{2^{\mid N\mid-1}}\sum_{S\subseteq N,i\in S}[v(S)-v(S\setminus\{i\})]$. Now let's prove that is copy and split Sybil-proof.

\textbf{Copy case}: Let $i$ be a player that generates a set of $K=K'\cup\{i\}$ sybils, such that $i\not\in K'$  with $\mid K'\mid=k$. Then:
\begin{align*}
    \sum_{j\in K\cup\{i\}}\phi_j(\tilde{v}) &= \sum_{j\in K }\frac{1}{2^{n+k-1}}\sum_{S\subseteq N\cup K,j\in S}[\tilde{v}(S)-\tilde{v}(S\setminus i)]\\
    &=\frac{1}{2^{n+k-1}}\left(\sum_{j\in K }\sum_{\substack{S\subseteq N\cup K,\\ \mid S\cap K\mid\geq2}}\underbrace{[\tilde{v}(S)-\tilde{v}(S\setminus i)]}_{=0}+\sum_{j\in K }\sum_{\substack{S\subseteq N\cup K,\\ \mid S\cap K\mid=1}}\underbrace{[\tilde{v}(S)-\tilde{v}(S\setminus i)]}_{=v(S)-v(S\setminus i)}\right)\\
    &=\frac{\overbrace{\mid K\mid}^{=k+1}}{2^{n+k-1}}\sum_{S\subseteq N,i\in S}[v(S)-v(S\setminus\{i\})]\leq \frac{1}{2^{n-1}}\sum_{S\subseteq N,i\in S}[v(S)-v(S\setminus\{i\})]
\end{align*}

\textbf{Split case}: Let $i$ be a player that generates a set of $K=K'\cup\{i\}$ sybils, with $i\not\in K'$ such that $i\not\in K'$  with $\mid K'\mid=k$.
\begin{align*}
    \sum_{j\in K\cup\{i\}}\phi_j(\tilde{v}) &= \sum_{j\in K }\frac{1}{2^{n+k-1}}\sum_{S\subseteq N\cup K,j\in S}[\tilde{v}(S)-\tilde{v}(S\setminus i)]\\
    &=\frac{1}{2^{n+k-1}}\left(\sum_{j\in K }\sum_{S\subseteq N\cup K, K\subseteq S}[\underbrace{\tilde{v}(S)-\tilde{v}(S\setminus \{j\})}_{=v((S\setminus K)\cup\{i\})-v(S\setminus K)}]\right)\\
    &=\frac{\mid K\mid}{2^{n+k-1}}\sum_{S\subseteq N,i\in S}[v(S)-v(S\setminus\{i\})]\leq  \frac{1}{2^{n-1}}\sum_{S\subseteq N,i\in S}[v(S)-v(S\setminus\{i\})]
\end{align*}
This concludes the proof of the theorem.
\end{proof}
\begin{proof}\ref{prop:upperbound} Is deduced immediately by the \ref{eq:spoptimal}.
\end{proof}

\begin{proof}\ref{prop:banzhaf_properties}
\begin{itemize}
    \item This is shown in \cite{dragan1996new}.
    \item Trivial.
    \item $\frac{1}{2^{n-1}}\sum_{S\subseteq N\setminus\{i\}}[v(S\cup \{i\})-v(S)]=\frac{1}{2^{n-1}}\sum_{S\subseteq N\setminus\{i\}}v(i) = v(i)$.
    \item This is shown in \cite{nowak1997axiomatization}.
\end{itemize}
\end{proof}

\begin{proof} \ref{theorem:not_sybil} Let's do it first for the copy Sybil extension. Consider the game $v_{[n]}(L)=1$ for all $L\subseteq [n]$. Then, using the Shapley value formula, we have that $\phi_i(v_{[n]})=1/n$. The copy Sybil extension game of the player $i=1$ with one Sybil becomes $v_{[n+1]}$, and so the payoff of the player $i=1$ is $2\phi_1(v_{[n+1]})=2/(n+1)>1/n$ for $n\geq2$. For the split Sybil extension, follows analogously considering the Symmetric game $w_{[n]}$.
\end{proof}
\begin{proof} \ref{theorem:main} Let $\psi_i(v) = \frac{1}{2^{N-1}}\sup\left\{\sum_{i\in K}\phi_k(\bar{v})\mid \bar{v}\in \text{SE}\left(v\right)\right\}$. First, we will prove that $\psi$ 

\begin{itemize}
    \item \textbf{Marginality}: Let $v,w\in TU^n_m$ and $v'_i\geq w'_i$. Then, clearly $v'^{K}_j\geq w'^{K}_j$ for all $j\in [N]\cup K$. Since $\phi$ holds marginality, we have that $\phi_k(v^K)\geq\phi_k(w^K)$ and so $\psi_i(v)\geq \psi_i(w)$.
    \item \textbf{No deficit}: Let $v\in TU^n_m$. Since $v^K(N\cup K)=v(K)$, we have that $\sum_{i\in K}\phi_i(v^K)\leq v(N)$ so, $\psi_i(v)\leq \frac{v(N)}{2^{n-1}}$. Therefore, $\sum_{i=1}^n \psi_i(v)\leq \frac{nv(N)}{2^{n-1}}\leq v(N)$. 
    \item \textbf{Symmetry}: Is deduced by the symmetry of $\phi$.
    \item \textbf{Min}: Holds $\psi_i(v)\geq \frac{1}{2^{n-1}}\phi_i(v)$. Therefore $\sum_{i=1}^n\psi_i(v)\geq \frac{1}{2^{n-1}}$. So, $\min_{v\in TU^n_m}\frac{W(\psi,v)}{v(N)} \geq \frac{1}{2^{n-1}}$. The other inequality is deduced by using Proposition 2.3. \cite{mazorra2023cost} and the fact that $\psi$ is Sybil-proof.
    \item \textbf{General-Sybil-proof}: Suppose a player generates $k$ sybils (counting the original one) augmenting the game $v$ to $w$. 
    By construction for each Sybil we have that 
    \begin{equation*}
            \sup\left\{\sum_{i\in K}\phi_k(\bar{v})\mid \bar{v}\in \text{SE}\left(v\right)\right\}\geq \sup\left\{\sum_{i\in K}\phi_k(\bar{v})\mid \bar{v}\in \text{SE}\left(w\right)\right\}   
    \end{equation*}
    Therefore,
    \begin{align*}
        \frac{1}{2^{N-1}}\sup\left\{\sum_{i\in K}\phi_k(\bar{v})\mid \bar{v}\in \text{SE}\left(v\right)\right\}&\geq\frac{k}{2^{N-1+k-1}} \sup\left\{\sum_{i\in K}\phi_k(\bar{v})\mid \bar{v}\in \text{SE}\left(v\right)\right\}\\
        &\geq\sum_{k=1}^n\frac{1}{2^{N-1}}\sup\left\{\sum_{i\in K}\phi_k(\bar{v})\mid \bar{v}\in \text{SE}\left(w\right)\right\}
    \end{align*}
\end{itemize}
\end{proof}
\begin{proof}\ref{prop:bound} 
The  $\sup\left\{\sum_{i\in K}\phi_k(\bar{v})\mid \bar{v}\in \text{SE}\left(v\right)\right\}\leq v(N)$. Therefore,  $\psi_i(v) \leq \frac{v(N)}{2^{n-1}}$, we have that $\sum_{i=1}^n\psi_i(v)\leq \frac{nv(N)}{2^{n-1}}$. The operator $\bar{\psi}$ is clearly Symmetric and strong monotonic. Is Sybil-proof since both $\theta$ and $\psi$ are general spill-proof. The opeartor $\bar{\psi}$ has no-deficit by the following argument:
\begin{align*}
\left(1+\frac{n}{2^{n-1}}\right)\sum_{i=1}^n\bar{\psi}_i(v) &= \sum_{i=1}^n\max\{\psi_i(v),\theta_i(v)\}\\
                            &\leq\sum_{i=1}^n\max\{\frac{v(N)}{2^{n-1}},\theta_i(v)\}\\
                            &\leq \sum_{i=1}^n\left [\frac{v(N)}{2^{n-1}}+\theta_i(v)\right ]\\
                            &\leq \frac{nv(N)}{2^{n-1}} + v(N)=\left(1+\frac{n}{2^{n-1}}\right)v(N)
\end{align*}
\end{proof}
So $\sum_{i=1}^n\bar{\psi}_i(v)\leq v(N)$. 
Is $\left(\frac{1}{1+\frac{n}{2^{n-1}}}\right)-$Separable by construction since
\begin{equation*}
    \psi_i(v) =\left(\frac{1}{1+\frac{n}{2^{n-1}}}\right)\max\{\psi_i(v),\theta_i(v)\}\geq \left(\frac{1}{1+\frac{n}{2^{n-1}}}\right)\theta_i(v).
\end{equation*}
\begin{proof}\ref{prop:comparation} The incentive compatibility and truthfulness are deduced by the monotonicity of the allocation and the Myerson lemma.
\end{proof}

\end{document}